\documentclass[12pt]{amsart}

\usepackage[left=12mm, right=12mm, top=10mm, bottom=15mm]{geometry}

\usepackage[english]{babel}
\usepackage[T1]{fontenc}
\usepackage[utf8]{inputenc}

\theoremstyle{plain}
    \newtheorem{theorem}{Theorem}[section]
    \newtheorem{lemma}[theorem]{Lemma}
    \newtheorem{proposition}[theorem]{Proposition}

\theoremstyle{definition}

    \newtheorem{example}[theorem]{Example}

\usepackage[
    draft = false,
    unicode = true,
    colorlinks = true,
    allcolors = blue,
    hyperfootnotes = true,
    citecolor = red
]{hyperref}
\usepackage{amsfonts,amsmath,amssymb,amscd,amsthm,url}
\usepackage[inline]{enumitem}
\usepackage{graphicx,epsf,afterpage, wrapfig}
\usepackage{soul}
\usepackage{multicol}
\usepackage{array}
\usepackage{braket}
\usepackage{epigraph}
\usepackage{rotating, floatflt}
\usepackage{xstring} 

\usepackage[
backend=biber,
style=numeric-comp,
sorting=none,
giveninits=true
]{biblatex}
\usepackage{csquotes}

\usepackage{xcolor}

\usepackage{ulem}

\usepackage{tikz}
\usetikzlibrary{positioning, calc, intersections, through, arrows, matrix, chains, math}
\usetikzlibrary{arrows.meta}
\usetikzlibrary{shadows}
\usetikzlibrary{graphs}
\usetikzlibrary{graphs.standard}
\usetikzlibrary{patterns}
\usetikzlibrary{decorations.pathreplacing,calligraphy,backgrounds}

\DeclareFieldFormat{usera}{\href{https://arxiv.org/abs/#1}{\texttt{arXiv:#1}}}

\renewbibmacro*{finentry}{\printfield{usera}\newunit\finentry}

\renewbibmacro{in:}{%
  \iffieldundef{usera}
    {\printtext{\bibstring{in}\intitlepunct}}
    {}%
}

\newcommand\norm[1]{\ensuremath{\left\lVert#1\right\rVert}}
\newcommand\abs[1]{\ensuremath{\left\lvert#1\right\rvert}}

\newcommand{\Pcal}{\ensuremath{\mathcal{P}}}

\DeclareMathOperator{\supp}{supp}

\newcommand{\Bcal}{\mathcal{B}}

\newcommand{\Ical}{\mathcal{I}}

\newcommand{\Mcal}{\mathcal{M}}
\newcommand{\Hcal}{\mathcal{H}}

\newcommand{\Fcal}{\mathcal{F}}

\newcommand{\Ucal}{\mathcal{U}}

\newcommand{\Abf}{{\ensuremath{\mathbf{A}}}}

\newcommand{\Dbf}{{\ensuremath{\mathbf{D}}}}

\newcommand{\Ibf}{{\ensuremath{\mathbf{I}}}}

\newcommand{\Pbf}{{\ensuremath{\mathbf{P}}}}
\newcommand{\Qbf}{{\ensuremath{\mathbf{Q}}}}

\newcommand{\Sbf}{{\ensuremath{\mathbf{S}}}}

\newcommand{\Ubf}{{\ensuremath{\mathbf{U}}}}

\newcommand{\Phbf}{{\ensuremath{\mathbf{\Phi}}}}

\newcommand{\defing}[1]{\textbf{\emph{\mathversion{bold}#1}}}

\newcommand{\continuum}{\ensuremath{\mathfrak{c}}}

\newcommand{\R}{\ensuremath{\mathbb{R}}}
\newcommand{\Z}{\ensuremath{\mathbb{Z}}}
\newcommand{\Cx}{\ensuremath{\mathbb{C}}}

\newcommand{\N}{\ensuremath{\mathbb{N}}}

\renewcommand{\geq}{\geqslant}

\renewcommand{\leq}{\leqslant}

\newcounter{mcnt}

\newcounter{wordcnt}

\begin{document}

\title{Singularising preserving countable additivity \\ quantum channels on quantum measurable cardinals}

\author{Sviatoslav V. Dzhenzher}

\begin{abstract}
    We investigate the structural and dynamical properties of a class of quantum channels on von Neumann algebras induced by averaging over operator groups via the Pettis integral.
     Utilising the classical Yosida--Hewitt decomposition, we focus on the interplay between the set-theoretic properties of the underlying space and the topological nature of the resulting quantum states.
    We establish a sufficient condition under which a channel preserves $\sigma$-additivity and exhibits a singularising property, completely suppressing the normal component of any incoming state.
    In conjunction with the theory of Ulam real-valued measurable cardinals, this framework reveals a novel phenomenon: the existence of quantum channels that transform normal states into singular yet strictly $\sigma$-additive states.
    Furthermore, we analyse the structural constraints on the preservation of state purity imposed by the cardinality of the continuum, and extend our constructions to invariant measures on groups and their unitary representations, establishing the convergence of their Ces\`aro averages in the strong operator topology.
\end{abstract}

\thanks{\hspace{-4mm}
S.\,V. Dzhenzher: sdjenjer@yandex.ru. orcid: 0009-0008-3513-4312
\\
Moscow Institute of Physics and Technology 141701, Institutskii per. 9, Dolgoprudny, Russia}

\maketitle
\thispagestyle{empty}

\emph{Keywords: Quantum channels, von Neumann algebras, Singular quantum states, Yosida--Hewitt decomposition, Ulam measurable cardinals, Pettis integral.}

\vspace{5mm}

\emph{MSC2020 Subject Classification: Primary 03E55, 46L30; Secondary 81P47, 46G10.}

\section{Introduction}

The theory of quantum channels and the evolution of quantum states on von Neumann algebras constitute the mathematical foundation of modern quantum mechanics and quantum information theory \cite{Takesaki1979, Kholevo2011-ch2}. In physical applications, primary attention is traditionally devoted to normal states, which possess the property of complete additivity and are associated with density matrices in separable Hilbert spaces. However, in infinite-dimensional systems, especially when transitioning to spaces of uncountable dimension, singular states begin to play a significant role \cite{Akemann-1967-sequential, Amosov-Sakbaev-2013}. These linear functionals vanish on the ideal of compact operators \cite{Farah-Wofsey} and lack a direct physical analogue in the form of density matrices \cite{Amosov-Bikchentaev-Sakbaev-24, Amosov-Sakbaev-2025}.
See also \cite{Busovikov-Sakbaev-26} for the dynamics of singular quantum states generated by the averaging of random shifts over some particular measures.

According to the classical Yosida--Hewitt decomposition \cite{Yosida-Hewitt}, any quantum state can be uniquely represented as a convex combination of its normal and singular components. While the dynamics of normal states have been studied in detail \cite{Amosov-Sakbaev-2025}, the mechanisms of generation and transformation of singular states under the action of quantum channels remain less explored. Of particular theoretical interest are channels possessing the \emph{singularising} property, i.e., those that map any initial state to a singular one \cite{DzhenzherDzhenzherSakbaev26, Dzhenzher26-mc}.

In the present paper, we investigate a class of quantum channels induced by averaging over groups of operators via the Pettis integral with respect to a given measure. The main focus is on the interplay between the set-theoretic properties of the underlying space (in particular, the properties of Ulam real-valued measurable cardinals \cite{Jech-p1-ch10}) and the topological properties of the resulting quantum states, building upon the framework of quantum measurable cardinals established in \cite{Blecher-Weaver}.

The main results of this paper are summarised as follows:
\begin{itemize}
    \item \textbf{\mathversion{bold}Preservation of $\sigma$-additivity:} we prove that if a quantum channel is constructed as the Pettis integral over a $\sigma$-additive measure defined on all subsets of some cardinal, then it strictly preserves the $\sigma$-additivity of quantum states (Theorem~\ref{t:pres-sig-add}).
    \item \textbf{Criterion for singularisation:} we establish a sufficient condition for a channel to be singularising. Specifically, if the $\sigma$-additive measure vanishes on singletons, the corresponding channel completely suppresses the normal component of any state (Theorem~\ref{t:sing-out}). Combined with the theory of Ulam measurable cardinals, this leads to a fascinating phenomenon: the existence of channels that convert normal states into singular yet strictly $\sigma$-additive states.
    \item \textbf{Preservation of purity:} we examine the conditions under which the considered channels can preserve the purity of vector states and establish structural limitations imposed by the cardinality of the continuum $\mathfrak{c}$ (Section~\ref{s:purity}).
    \item \textbf{Dynamics on groups:} we generalise the proposed construction to the case of left-invariant measures on groups and their unitary representations. We study the ergodic properties of these channels, characterise their fixed points, and prove the convergence of their Ces\`aro averages in the strong operator topology (Section~\ref{s:unitary}).
\end{itemize}

The structure of the paper is organised as follows.
In Section~\ref{s:von-Neumann}, we recall the necessary preliminaries regarding quantum states on von Neumann algebras, normality, singularity, and the Yosida--Hewitt decomposition.
Section~\ref{s:singularisation} is devoted to the construction of quantum channels on $\ell^2(\kappa)$ spaces and the proofs of their key properties regarding $\sigma$-additivity and singularisation.
In Section~\ref{s:purity}, we analyse the geometry of the state space and the behaviour of pure states under these channels using excision criteria.
In Section~\ref{s:unitary}, we consider analogous constructions for measures on general groups and their unitary representations, and investigate the asymptotic dynamics of quantum channel iterations.

\section{Quantum states on von Neumann algebras}\label{s:von-Neumann}

Let $\Hcal$ be a Hilbert space.
We presume that the inner product \((\cdot,\cdot)\) is linear by the second argument, as in quantum mechanics.
Since we usually work with normed states, it is convenient to use the notation \(S_1(\Hcal)\subset\Hcal\) for the unit sphere in $\Hcal$; that is, \(u \in S_1(\Hcal)\) if \(\norm{u}=1\).

Denote by \(\Bcal(\Hcal)\) the Banach algebra of linear bounded operators \(\Hcal\to\Hcal\).
We denote by \(\Pcal(\Hcal)\subset\Bcal(\Hcal)\) the family of \defing{projectors}.
For a normed \(u\in S_1(\Hcal)\), denote by \(\Pbf_u \in \Pcal(\Hcal)\) the \defing{projector} on the subspace generated by $u$; that is, for any \(v\in\Hcal\) given by
\[
    \Pbf_uv = (u,v)u,\qquad
    (v, \Pbf_u v) = \abs{(u,v)}^2.
\]
The \emph{identity operator} \(\Ibf\in\Bcal(\Hcal)\) is the projector onto the whole $\Hcal$.

Let \(\Mcal\subset\Bcal(\Hcal)\) be a von Neumann algebra of \defing{observables}; that is, a weakly closed $*$-algebra of $\Bcal(\Hcal)$ containing $\Ibf$.
For example, such algebras could be
\begin{itemize}
    \item the whole algebra \(\Bcal(\Hcal)\),

    \item the algebra of diagonal operators on \(\ell_2\),

    \item the algebra \(\{\lambda\Ibf : \lambda\in\Cx\}\).
\end{itemize}

Denote by \(\Sigma(\Mcal) := S^+_1(\Mcal^*)\) the space of \defing{quantum states}, that is, the linear functionals \(\Mcal \to \Cx\), lying on the intersection of the positive cone and the unit sphere.
We denote the \defing{action} of a quantum state \(\rho \in \Sigma(\Mcal)\) on an observable \(\Abf\in \Mcal\) by
\[
    \braket{\rho, \Abf}.
\]

Recall that for a collection \(\{r_i:i\in I\}\) of non-negative real numbers, their sum is defined as
\[
    \sum_{i \in I}r_i = \sup \left\{\sum_{i \in E} r_i : \text{$E\subset I$ is finite}\right\}.
\]
If the sum is finite, then there is at most a countable collection of non-zero $r_i$.

A state \(\rho\in\Sigma(\Mcal)\) (or just a functional $\rho\in \Mcal^*$) is \defing{normal} if it is weakly-$*$ continuous \cite{Blecher-Weaver}. We denote the space of normal states by \(\Sigma_n(\Mcal)\).
It is known \cite{Blecher-Weaver, Takesaki1979} that a state \(\rho\in\Sigma(\Mcal)\) is normal if and only if it is \defing{completely additive}, which means that the equality
\[
    \Braket{\rho, \sum_\alpha \Pbf_\alpha} = \sum_\alpha \Braket{\rho, \Pbf_\alpha}
\]
holds for any family of pairwise orthogonal projectors \(\Pbf_\alpha\in\Pcal(\Hcal)\cap\Mcal\).
It is also known that a state \(\rho\in\Sigma(\Bcal(\Hcal))\) is normal if and only if
\[
    \sup \Braket{\rho, \Pbf} = 1,
\]
where the supremum is taken over projectors onto finite-dimensional subspaces. This fact is well known in the case when $\Hcal$ is separable. We present Proposition~\ref{p:normal-sup} below just to rid ourselves of any doubts, since we do not know the reference for it.

A state \(\rho\in\Sigma(\Mcal)\) is \defing{singular} \cite{Akemann-1967-sequential}, if there does not exist a positive normal non-zero functional \(\rho_n\in\Mcal^*\) such that \(\rho\geq\rho_n\).
It is well known, see for example \cite[Proposition~II.1]{Akemann-1967-sequential},
\cite[Theorem~3.8]{Takesaki1979}, that \(\rho\) is singular, if and only if for any non-zero projector \(\Pbf\in\Mcal\), there exists a non-zero projector \(\Pbf_0 \leq \Pbf\) (from $\Mcal$) such that
\[
    \Braket{\rho, \Pbf_0} = 0.
\]
We denote by \(\Sigma_s(\Mcal)\) the space of singular states.
Since we will usually work with \(\Mcal=\Bcal(\Hcal)\), it is convenient to keep in mind the following equivalent definition \cite[Definition~4.19]{Farah-Wofsey}: a state \(\rho\in\Sigma(\Bcal(\Hcal))\) is singular if and only if it vanishes on compact operators.

The well known result \cite[Theorem~2.14]{Takesaki1979}, which is a generalisation of Yosida--Hewitt decomposition \cite{Yosida-Hewitt}, states that any \(\rho\in\Sigma(\Mcal)\) can be represented as
\[
    \rho = \lambda\rho_n + (1-\lambda)\rho_s
\]
in the unique way for some \(\lambda\in[\,0,1\,]\), and states \(\rho_n\in\Sigma_n(\Mcal)\) and \(\rho_s\in\Sigma_s(\Mcal)\).

Now, we are ready to prove the promised criterion of normality.

\begin{proposition}\label{p:normal-sup}
    Let $\Hcal$ be a Hilbert space.
    A state \(\rho\in\Sigma(\Bcal(\Hcal))\) is normal if and only if
    \[
        \sup \Braket{\rho, \Pbf_{\text{finite}}} = 1,
    \]
    where the supremum is taken over projectors $\Pbf_{\text{finite}}$ onto finite-dimensional subspaces.
\end{proposition}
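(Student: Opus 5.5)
We use the Yosida--Hewitt decomposition together with the two characterisations recalled above: a normal state is completely additive, and a state on $\Bcal(\Hcal)$ is singular if and only if it vanishes on compact operators.

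\emph{Necessity.} Let $\rho$ be normal and fix an orthonormal basis $\{e_i : i\in I\}$ of $\Hcal$. The projectors $\Pbf_{e_i}$ are pairwise orthogonal and $\sum_i \Pbf_{e_i} = \Ibf$ in the strong (hence weak) sense. Complete additivity gives
\[
    1 = \braket{\rho,\Ibf} = \sum_{i\in I}\braket{\rho,\Pbf_{e_i}} = \sup_{E\subset I \text{ finite}} \Braket{\rho, \sum_{i\in E}\Pbf_{e_i}} .
\]
Each $\sum_{i\in E}\Pbf_{e_i}$ is a projector onto a finite-dimensional subspace, so $\sup\braket{\rho,\Pbf_{\text{finite}}}\geq 1$. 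The reverse inequality holds because $\Pbf\leq\Ibf$ and $\rho$ is positive.

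\emph{Sufficiency.} Write $\rho=\lambda\rho_n+(1-\lambda)\rho_s$ with $\rho_n$ normal and $\rho_s$ singular. Every projector $\Pbf$ onto a finite-dimensional subspace is of finite rank, hence compact, so $\braket{\rho_s,\Pbf}=0$ and
\[
    \braket{\rho,\Pbf}=\lambda\braket{\rho_n,\Pbf}\leq\lambda .
\]
Taking the supremum, the hypothesis gives $1\leq\lambda$, so $\lambda=1$ and $\rho=\rho_n$ is normal.

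\emph{Main obstacle.} The one delicate point is in the necessity direction when $\Hcal$ is non-separable. There, complete additivity has to be applied to an uncountable family $\{\Pbf_{e_i}\}$. This is legitimate because complete additivity, as stated above, covers arbitrary families of pairwise orthogonal projectors, with the right-hand side understood as the supremum of finite partial sums. No separability is used anywhere else in the argument.
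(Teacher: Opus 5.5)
Your proof is correct and follows the paper's argument: complete additivity over an orthonormal basis gives necessity, and for sufficiency you use the Yosida--Hewitt decomposition with the singular part vanishing on finite-rank projectors. Your version is slightly tidier in two spots: you justify $\sup\braket{\rho,\Pbf_{\text{finite}}}\leq 1$ explicitly by positivity, and you only use $\braket{\rho_n,\Pbf}\leq 1$ rather than re-applying the necessity direction to $\rho_n$.
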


\begin{proof}
    Suppose \(\rho\) is normal. Then it is completely additive. Hence, taking the orthonormal basis \(\{e_\alpha\}\) in \(\Hcal\), we have
    \[
        1 = \Braket{\rho, \Ibf} = \sum_\alpha \Braket{\rho, \Pbf_\alpha} = \sup \Braket{\rho, \Pbf_{\text{finite}}},
    \]
    where the supremum is taken over projectors $\Pbf_{\text{finite}}$ onto finite-dimensional subspaces, the first equality follows from the normalisation of states, the second equality follows from complete additivity, and the last equality follows from the definition of the sum.

    Conversely, suppose that
    \[
        \sup\Braket{\rho, \Pbf_{\text{finite}}} = 1,
    \]
    where the supremum is taken over projectors $\Pbf_{\text{finite}}$ onto finite-dimensional subspaces.
    Take the decomposition
    \[
        \rho = \lambda\rho_n + (1-\lambda)\rho_s,
    \]
    where \(\rho_n\) is normal and $\rho_s$ is singular.
    Hence
    \[
        \Braket{\rho, \Pbf_{\text{finite}}} = \lambda\Braket{\rho_n, \Pbf_{\text{finite}}},
    \]
    and so
    \[
        1 = \sup\Braket{\rho, \Pbf_{\text{finite}}} = \lambda\sup\Braket{\rho_n, \Pbf_{\text{finite}}} = \lambda.
    \]
    Thus, $\rho=\rho_n$ is normal.
\end{proof}

A state \(\rho\in\Sigma(\Mcal)\) is said to be \defing{pure} if it is an extreme point of \(\Sigma(\Mcal)\).

For \(u\in S_1(\Hcal)\), a normal state \(\rho=\rho_u\in\Sigma_n(\Mcal)\) is called a \defing{vector state}, if for all \(\Abf\in\Mcal\) we have
\[
    \Braket{\rho, \Abf} = (u, \Abf u).
\]
Vector states are pure if \(\Mcal=\Bcal(\Hcal)\), and in some other cases.
See Example~\ref{ex:vec-non-pure} for the case when a vector state is not pure, and Proposition~\ref{p:excise-pv} for details.

All the notions above can be called classical.
The following notion, apparently, cannot be called classical, but it is of extreme interest to us.
A state \(\rho\in\Sigma(\Mcal)\) is \defing{$\sigma$-additive} (or \emph{countably additive}) if
\[
    \Braket{\rho, \sum_{n\in\N} \Pbf_n} = \sum_{n\in\N} \Braket{\rho, \Pbf_n}
\]
for any countable family of pairwise orthogonal projectors \(\Pbf_n\in\Mcal\).
From the above, it is clear that all normal states are $\sigma$-additive.

\section{Singularising quantum channels preserving countable additivity}\label{s:singularisation}

Fix any cardinal \(\kappa>\aleph_0\).
For a function \(f\colon\kappa\to\Cx\), we define its \defing{support} as
\[
    \supp f := \{i \in \kappa : f(i)\neq 0\}.
\]
As in \cite{Dzhenzher26-mc, Blecher-Weaver}, our primary focus will be on the Hilbert space \(\ell_2(\kappa)\) of functions \(f\colon \kappa\to\Cx\) with at most a countable support and the norm
\[
    \norm{f} := \sqrt{\sum_{i < \kappa}\abs{f(i)}^2}.
\]
We denote by \(\{e_i\}\) its orthonormal basis.

Let $\kappa$ additionally be equipped with an Abelian group structure.
For example, it can be done by taking the group \(\bigoplus_{i<\kappa} \Z_2\) of finite functions \(\kappa\to\Z_2\), which is equinumerous to \(\kappa\).
Or, for \(\kappa=\R\), it can be just the usual group \((\R,+)\).
Define the \defing{shift operator} \(\Sbf_j \in \Bcal(\Hcal)\) by the rule
\[
    \Sbf_j e_i := e_{i+j}.
\]

By $2^\kappa$ we denote the set of all subsets of~$\kappa$.

In this text, we consider finitely-additive measures \(\mu\colon2^\kappa\to[\,0,1\,]\) with \(\mu(\kappa)=1\).
Such measures always exist: for example, they can be delta-measures.
If $\kappa$ is a Ulam real-valued measurable cardinal, then \cite{Jech-p1-ch10} there exists a $\sigma$-additive measure that vanishes on singletons.

Recall that the integral over a finitely-additive measure \(\mu\colon2^\kappa\to[\,0,1\,]\) is defined as follows.
For a simple function \(f = \sum_{m=1}^M c_m\Ical\{f=c_m\}\colon\kappa\to\Cx\), the integral is defined as usual
\[
    \int f\,d\mu = \int f(j)\,d\mu(j) := \sum_{m=1}^M c_m \mu\{f=c_m\}.
\]
In general, a function $f\colon\kappa\to\Cx$ is integrable if there exists a sequence \(f_n\) of simple functions that is a Cauchy sequence in mean and converges to $f$ in measure.
In this case,
\[
    \int f\,d\mu = \int f(j)\,d\mu(j) := \lim_{n\to\infty} \int f_n\,d\mu.
\]
For details see, for example, \cite{Dunford-Schwartz-vol1}.

We say that a \defing{generalised quantum channel} is a linear positive map \(\Phbf\colon \mathcal{M}^* \to \mathcal{M}^*\) that preserves the state space, which means \(\Phbf(\Sigma(\mathcal{M})) \subset \Sigma(\mathcal{M})\).
Note that in standard quantum information theory, channels are usually defined as completely positive maps on the predual space \(\mathcal{M}_{*}\). However, since the main properties investigated in this paper~--- namely, \(\sigma\)-additivity preservation and singularisation~--- depend exclusively on the positivity and linearity of the mapping, we restrict our consideration to positive linear maps on the entire dual space \(\mathcal{M}^{*}\).
Since we consider only quantum states (but not functionals from $\Mcal^*$), we will denote generalised quantum channels as \(\Sigma(\Mcal)\to\Sigma(\Mcal)\).
Moreover, we will omit the word ``generalised'', and will restrict our attention on quantum channels \(\Sigma(\Bcal(\Hcal))\to \Sigma(\Bcal(\Hcal))\).
See classical exposition on quantum channels, for example, in \cite{Kholevo-qprobstat, Kholevo-qsys-chan-inf, Amosov-Sakbaev-2025}.

Note that the definition of quantum channels is given in the Schr\"{o}dinger representation.
There exists another picture, the Heisenberg representation, where the channels act on the operator algebra itself.
These two pictures are related via the equation
\[
    \Braket{\Phbf\rho, \Abf} = \Braket{\rho, \Phbf^*(\Abf)}.
\]

For a measure \(\mu\colon2^\kappa\to[\,0,1\,]\),
define the quantum channel \(\Phbf_\mu\colon\Sigma(\Bcal(\ell_2(\kappa)))\to\Sigma(\Bcal(\ell_2(\kappa)))\) given by the Pettis integral
\[
    \Phbf_\mu\rho := \int \Sbf_j\rho\Sbf_j^*\,d\mu(j).
\]
This means that for any \(\Abf\in\Bcal(\ell_2(\kappa))\),
\[
    \Braket{\Phbf_\mu\rho, \Abf} = \int \Braket{\Sbf_j\rho\Sbf_j^*,\Abf}\,d\mu(j) =
    \int \Braket{\rho,\Sbf_j^*\Abf\Sbf_j}\,d\mu(j).
\]
We will be mostly interested in quantum channels preserving $\sigma$-additivity: by this, we mean that these channels map initial $\sigma$-additive states to also $\sigma$-additive ones.

\begin{theorem}\label{t:pres-sig-add}
    Let $\mu\colon2^\kappa\to[\,0,1\,]$ be a $\sigma$-additive measure.
    Then the channel \(\Phbf_\mu\) preserves $\sigma$-additivity.
\end{theorem}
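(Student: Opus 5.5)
The plan is to fix a $\sigma$-additive state $\rho$ and a countable family of pairwise orthogonal projectors $\Pbf_n\in\Bcal(\ell_2(\kappa))$ with $\Pbf:=\sum_n\Pbf_n$ (strong sum), and to verify directly that $\Braket{\Phbf_\mu\rho,\Pbf}=\sum_n\Braket{\Phbf_\mu\rho,\Pbf_n}$. We conjugate by the unitary $\Sbf_j$. For every $j<\kappa$, the operators $\Sbf_j^*\Pbf_n\Sbf_j$ are again pairwise orthogonal projectors, and
\[
    \Sbf_j^*\Pbf\Sbf_j=\sum_n \Sbf_j^*\Pbf_n\Sbf_j ,
\]
since conjugation by a unitary is strongly continuous. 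Hence $\sigma$-additivity of $\rho$ gives, pointwise in $j$,
\[
    g(j):=\Braket{\rho,\Sbf_j^*\Pbf\Sbf_j}=\sum_n g_n(j),\qquad g_n(j):=\Braket{\rho,\Sbf_j^*\Pbf_n\Sbf_j}\in[\,0,1\,].
\]
The statement therefore reduces to exchanging $\int\,d\mu$ with $\sum_n$, that is, to proving
\[
    \int\sum_n g_n\,d\mu=\sum_n\int g_n\,d\mu .
\]

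Before the exchange, we check integrability. Every bounded function $\kappa\to\R$ is integrable with respect to a finitely additive measure defined on all of $2^\kappa$: it is a uniform limit of simple functions, and all level sets are measurable because every subset of $\kappa$ is. So all $g_n$, the partial sums $G_N=\sum_{n\le N}g_n$, and $g$ are integrable, and the integral is linear and monotone. Linearity immediately gives $\int G_N\,d\mu=\sum_{n\le N}\int g_n\,d\mu$.

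The main step is a monotone convergence theorem for the $\sigma$-additive measure $\mu$ on the $\sigma$-algebra $2^\kappa$. Since $\mu$ is countably additive on a $\sigma$-algebra, the integral defined above (uniform limits of simple functions) agrees with the Lebesgue integral of bounded measurable functions with respect to $\mu$. Then $0\le G_N\uparrow g\le 1$ pointwise, and either the monotone or the dominated convergence theorem yields $\int G_N\,d\mu\to\int g\,d\mu$. To keep things self-contained, one can argue directly. Fix $\varepsilon>0$ and set $A_N=\{j: g(j)-G_N(j)>\varepsilon\}$. These sets decrease to $\emptyset$, so $\mu(A_N)\to0$ by $\sigma$-additivity, which gives
\[
    0\le\int(g-G_N)\,d\mu\le\varepsilon+\mu(A_N).
\]
This is exactly where $\sigma$-additivity of $\mu$ is used, and it is the only real obstacle; for a merely finitely additive $\mu$ the argument fails.

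Combining the steps,
\[
    \Braket{\Phbf_\mu\rho,\Pbf}=\int g\,d\mu=\lim_N\sum_{n\le N}\int g_n\,d\mu=\sum_n\Braket{\Phbf_\mu\rho,\Pbf_n},
\]
so $\Phbf_\mu\rho$ is $\sigma$-additive. For completeness, we also note that $\Phbf_\mu\rho$ is a state: it is positive by monotonicity of the integral, and $\Braket{\Phbf_\mu\rho,\Ibf}=\int 1\,d\mu=1$.
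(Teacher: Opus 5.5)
Your proof is correct and follows the paper's argument. Like the paper, you unfold the definition of $\Phbf_\mu$, apply $\sigma$-additivity of $\rho$ pointwise in $j$ to the orthogonal projectors $\Sbf_j^*\Pbf_n\Sbf_j$, and then interchange $\sum_n$ and $\int d\mu$ by monotone convergence; you also spell out what the paper leaves implicit, namely integrability of bounded functions and a self-contained monotone convergence argument showing exactly where $\sigma$-additivity of $\mu$ is used.
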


\begin{proof}
    Suppose that \(\rho\in\Sigma(\Bcal(\ell_2(\kappa)))\) is $\sigma$-additive.
    Then, for any pairwise orthogonal projectors \(\Pbf_n\in\Bcal(\ell_2(\kappa))\), we have
    \begin{multline*}
        \Braket{\Phbf_\mu\rho, \sum_n \Pbf_n} =
        \int \Braket{\rho, \sum_n\Sbf_j^*\Pbf_n\Sbf_j}\,d\mu(j) =
        \int \sum_n \Braket{\rho, \Sbf_j^*\Pbf_n\Sbf_j}\,d\mu(j) = \\ =
        \sum_n \int \Braket{\rho, \Sbf_j^*\Pbf_n\Sbf_j}\,d\mu(j) = 
        \sum_n \Braket{\Phbf_\mu\rho, \Pbf_n},
    \end{multline*}
    where:
    \begin{itemize}[nosep]
        \item the first and the last are by the definition of \(\Phbf_\mu\),
        \item the second is by $\sigma$-additivity of $\rho$,
        \item the third is by the Levi monotone convergence theorem.
    \end{itemize}
    This proves the $\sigma$-additivity of \(\Phbf_\mu\rho\).
\end{proof}

We say that a quantum channel \(\Phbf\) is \defing{singularising} if it maps initial states to singular ones;
strictly speaking, if \(\Phbf\rho\) is singular for any quantum state \(\rho\).
Cf.~the following theorem with the results in \cite{DzhenzherDzhenzherSakbaev26, Dzhenzher26-mc}, where analogous singularising quantum channels were obtained.

\begin{theorem}\label{t:sing-out}
    Let $\mu\colon2^\kappa\to[\,0,1\,]$ be a $\sigma$-additive measure that vanishes on singletons.
    Then the channel \(\Phbf_\mu\) is singularising.
\end{theorem}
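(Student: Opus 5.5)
We show that $\Phbf_\mu\rho$ vanishes on every compact operator, using the characterisation of singular states on $\Bcal(\Hcal)$ recalled in Section~\ref{s:von-Neumann} (equivalently, Proposition~\ref{p:normal-sup} together with the Yosida--Hewitt decomposition). Since $\Phbf_\mu\rho$ is a state, it is bounded, and finite-rank operators are norm dense in the compacts. Every finite-rank operator is a linear combination of rank-one operators $v\mapsto (u,v)w$ with $u,w\in\ell_2(\kappa)$. It therefore suffices to show that $\Braket{\Phbf_\mu\rho,\Abf}=0$ for each such rank-one $\Abf$.

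A more economical route is to check only the projectors $\Pbf_u$. By positivity and the Cauchy--Schwarz inequality, a positive functional vanishing on all rank-one projectors vanishes on all finite-rank operators: for finite-rank $\Abf$ with range in a finite-dimensional subspace with projector $\Pbf$, we have $\abs{\braket{\omega,\Abf}}\le \braket{\omega,\Pbf}^{1/2}\braket{\omega,\Abf\Abf^*}^{1/2}$, and $\braket{\omega,\Pbf}$ is a finite sum of values on rank-one projectors. So we fix a unit vector $u\in\ell_2(\kappa)$ and compute
\[
    \Braket{\Phbf_\mu\rho,\Pbf_u}=\int \Braket{\rho,\Sbf_j^*\Pbf_u\Sbf_j}\,d\mu(j)=\int \Braket{\rho,\Pbf_{\Sbf_j^*u}}\,d\mu(j),
\]
using that $\Sbf_j$ is unitary, so $\Sbf_j^*\Pbf_u\Sbf_j=\Pbf_{\Sbf_{-j}u}$.

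The key step is to show that the integrand $g(j):=\braket{\rho,\Pbf_{\Sbf_{-j}u}}$ is $\mu$-a.e.\ zero. Write $C=\supp u$, which is countable. Let $Q_F$ denote the projector onto $\overline{\operatorname{span}}\{e_i: i\in F\}$. Then $\Pbf_{\Sbf_{-j}u}\le Q_{C-j}$, so $0\le g(j)\le \braket{\rho,Q_{C-j}}$. The plan is a countable-averaging argument. Fix $\varepsilon>0$ and consider the set $J_\varepsilon=\{j: \braket{\rho,Q_{C-j}}\ge\varepsilon\}$. If $\mu(J_\varepsilon)>0$, we use that $\mu$ is $\sigma$-additive and vanishes on singletons, so $J_\varepsilon$ is uncountable. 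We then try to pick indices $j_1,\dots,j_N\in J_\varepsilon$ with $N>1/\varepsilon$ such that the countable sets $C-j_k$ are pairwise disjoint. This is possible inductively: each new $j$ must avoid only the countable set $\bigcup_k (C-C+j_k)$, and removing countably many points from $J_\varepsilon$ leaves a set of positive measure. The projectors $Q_{C-j_k}$ are then orthogonal, so $1\ge\sum_k\braket{\rho,Q_{C-j_k}}\ge N\varepsilon>1$, a contradiction. Hence $\mu(J_\varepsilon)=0$ for every $\varepsilon>0$, so $g=0$ $\mu$-a.e., and since $0\le g\le1$ the integral vanishes.

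Notice that $\sigma$-additivity of $\mu$ enters only in the step ``countable sets are null, so removing them keeps positive measure''. Mere finite additivity plus vanishing on singletons would not suffice. The main obstacle is justifying measure-theoretic manipulations for the finitely-additive-style integral defined above: we need $\int g\,d\mu=0$ when $0\le g\le 1$ and $\mu\{g\ge\varepsilon\}=0$ for all $\varepsilon$. This follows from the bound $\int g\,d\mu\le \varepsilon+\mu\{g\ge\varepsilon\}$, which I would verify by approximating $g$ with the simple function that is $\varepsilon$ below the threshold and $1$ above it. Integrability of $g$ itself is implicitly assumed in the definition of $\Phbf_\mu$.
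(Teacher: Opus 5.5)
Your proof is correct, but it is organised differently from the paper's.

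The paper uses the Yosida--Hewitt split $\rho=\lambda\rho_n+(1-\lambda)\rho_s$ and treats the two parts separately:
\begin{itemize}
\item For a vector state $\rho_u$, the integrand $\abs{(\Sbf_j u, v)}^2$ is non-zero only on the countable, hence $\mu$-null, set $\supp v-\supp u$. This is then extended, somewhat tacitly, to all normal states.
\item For singular $\rho$, the integrand $\Braket{\rho,\Pbf_{\Sbf_j^* v}}$ vanishes identically.
\end{itemize}

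You treat an arbitrary state in one step with a packing argument. Translates $C-j$ of a countable set can be chosen pairwise disjoint while avoiding only countably many points, and a state cannot put mass $\geq\varepsilon$ on more than $1/\varepsilon$ mutually orthogonal projectors.

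What your route buys:
\begin{itemize}
\item It avoids both the Yosida--Hewitt decomposition and the passage from vector states to density matrices.
\item It uses only positivity and finite additivity of $\rho$.
\item It actually proves slightly more than you claim. Running your induction inside an uncountable $J_\varepsilon$ already gives the contradiction, so $J_\varepsilon$ is countable for every $\varepsilon$. Hence for every state the integrand is supported on a countable set, exactly as in the paper's vector-state computation. In both proofs, the only property of $\mu$ used is that countable sets are null.
\end{itemize}
The paper's route is shorter, given the standard machinery.

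One small slip: with $\Abf=\Pbf\Abf$, the Cauchy--Schwarz bound is $\abs{\Braket{\omega,\Abf}}^2\leq\Braket{\omega,\Pbf}\Braket{\omega,\Abf^*\Abf}$, not with $\Abf\Abf^*$. The version you wrote can fail. This is harmless, because the factor $\Braket{\omega,\Pbf}$ is zero anyway.
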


\begin{proof}
    Let \(v \in S_1(\ell_2(\kappa))\) be from the unit sphere.
    If $\rho=\rho_u$ is a pure vector state, then
    \[
        \Braket{\Phbf_\mu\rho, \Pbf_v} = \int (\Sbf_j u, \Pbf_v \Sbf_ju)\,d\mu(j) =0,
    \]
    where the last equality follows since $\mu$ is $\sigma$-additive, and since \((\Sbf_j u, \Pbf_v \Sbf_ju)\neq 0\) holds only for \(j\) from the set \(\supp v - \supp u\), which is at most countable.
    
    Hence, for any normal state \(\rho\in\Sigma_n(\Bcal(\ell_2(\kappa)))\), the outcome \(\Phbf_\mu\rho\) vanishes on all finite-dimensional projectors, which means \(\Phbf_\mu\rho\) is singular.
    
    Finally, for a singular \(\rho\in\Sigma_s(\Bcal(\ell_2(\kappa)))\), we have
    \[
        \Braket{\Phbf_\mu\rho, \Pbf_v} = \int \Braket{\rho, \Pbf_{\Sbf_jv}}\,d\mu(j) =0,
    \]
    where the last equality follows since the term under the integral is zero.
\end{proof}

Theorems~\ref{t:pres-sig-add} and~\ref{t:sing-out} together claim that if \(\mu\) is a $\sigma$-additive measure that vanishes on singletons (constructed on a Ulam real-valued measurable cardinal), then \(\Phbf_\mu\) converts normal states to singular $\sigma$-additive states, which is fascinating.

Recall that \cite{Amosov-Sakbaev-2025} for any \(\rho\in\Sigma(\Bcal(\ell_2(\kappa)))\), there exists a finite non-negative finitely-additive measure \(\nu\) on \(S_1(\ell_2(\kappa))\) such that
\[
    \rho = \int \rho_u\,d\nu(u).
\]
So, there is another way to define a channel by the rule
\[
    \Phbf_\mu' \rho := \int d\nu(u)\int \Sbf_j\rho_u\Sbf_j^*\,d\mu(j).
\]

\begin{lemma}
    The quantum channels \(\Phbf_\mu\) and \(\Phbf_\mu'\) coincide.
    In particular, the quantum channel \(\Phbf_\mu'\) does not depend on the choice of $\nu$.
\end{lemma}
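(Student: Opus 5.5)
The plan is to test both channels against an arbitrary observable \(\Abf\in\Bcal(\ell_2(\kappa))\) and reduce the claim to an exchange of two integrals. The second integral is against \(\nu\) on \(S_1(\ell_2(\kappa))\); the first is against \(\mu\) on \(\kappa\). Fix \(\Abf\) and put \(g(u,j) := (\Sbf_j u, \Abf\Sbf_j u) = \braket{\rho_u, \Sbf_j^*\Abf\Sbf_j}\). Then, by definition,
\[
    \Braket{\Phbf_\mu'\rho, \Abf} = \int d\nu(u)\int g(u,j)\,d\mu(j),
    \qquad
    \Braket{\Phbf_\mu\rho, \Abf} = \int \Braket{\rho,\Sbf_j^*\Abf\Sbf_j}\,d\mu(j) = \int d\mu(j)\int g(u,j)\,d\nu(u).
\]
For the second expression, the last equality is just the representation \(\rho=\int\rho_u\,d\nu(u)\) applied to the observable \(\Sbf_j^*\Abf\Sbf_j\) for each fixed \(j\). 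Thus it suffices to justify a Fubini-type interchange for the bounded function \(g\), which satisfies \(\abs{g}\leq\norm{\Abf}\).

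Since everything is linear in \(\Abf\) and every bounded operator is a linear combination of four positive ones, I will first reduce to positive \(\Abf\). The key step is to avoid a general Fubini theorem for finitely-additive measures, which is false in general. Instead I will use the structure of the integrand. For fixed \(j\), the map \(u\mapsto g(u,j)\) is the quadratic form of the operator \(\Sbf_j^*\Abf\Sbf_j\), and the \(\nu\)-integral of such forms is exactly how \(\rho\) acts. So I will rewrite the inner \(\mu\)-integral, for fixed \(u\), as a quadratic form of a single operator:
\[
    \int g(u,j)\,d\mu(j) = (u, \Bbf u),
    \qquad
    \Bbf := \int \Sbf_j^*\Abf\Sbf_j\,d\mu(j) = \Phbf_\mu^*(\Abf).
\]
Here \(\Bbf\) is defined weakly via the sesquilinear form \((v,w)\mapsto\int (\Sbf_j v,\Abf\Sbf_j w)\,d\mu(j)\). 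That form is bounded by \(\norm{\Abf}\), so by Riesz it defines a bounded operator with \(\norm{\Bbf}\leq\norm{\Abf}\), positive when \(\Abf\) is. With this, the \(\nu\)-side becomes
\[
    \Braket{\Phbf_\mu'\rho,\Abf} = \int (u,\Bbf u)\,d\nu(u) = \braket{\rho,\Bbf}.
\]

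It then remains to identify \(\braket{\rho,\Bbf}\) with \(\int\braket{\rho,\Sbf_j^*\Abf\Sbf_j}\,d\mu(j)\), which is the main obstacle. The identity says that the Pettis-type weak integral defining \(\Bbf\) commutes with the (not necessarily normal) functional \(\rho\). Weak-operator convergence alone does not give this for singular \(\rho\). I would try to prove it through the simple-function approximation built into the definition of the \(\mu\)-integral. The map \(j\mapsto \Sbf_j^*\Abf\Sbf_j\) takes values in a norm-bounded set, and I would look for simple approximants \(\sum_m \Abf_m \Ical_{E_m}(j)\) converging to it in measure in operator norm. If such approximants exist, both sides commute with finite sums by linearity of \(\rho\), and they pass to the limit by norm continuity of \(\rho\). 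Failing norm measurability, the fallback is to treat the statement as the definition of the Pettis integral \(\Phbf_\mu\rho\), i.e. to read the definition as \(\Braket{\Phbf_\mu\rho,\Abf}=\braket{\rho,\Phbf_\mu^*\Abf}\). In that reading the identity is immediate, and both channels equal \(\braket{\rho,\Phbf_\mu^*(\Abf)}\).

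Independence of \(\nu\) follows at once, since the final expression involves only \(\rho\).
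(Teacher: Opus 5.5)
\textbf{There is a genuine gap, and it cannot be closed, because the missing identity is false.}

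Your reduction is correct as far as it goes, and it is the paper's route. Let $\mathbf{B}$ be the weak-operator integral of $j\mapsto\Sbf_j^*\Abf\Sbf_j$. You rigorously obtain $\Braket{\Phbf_\mu'\rho,\Abf}=\int(u,\mathbf{B}u)\,d\nu(u)=\Braket{\rho,\mathbf{B}}$, and this already shows that $\Phbf_\mu'$ does not depend on $\nu$. The gap is the remaining identity $\Braket{\rho,\mathbf{B}}=\int\Braket{\rho,\Sbf_j^*\Abf\Sbf_j}\,d\mu(j)$ for non-normal $\rho$. Neither of your repairs closes it.

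\emph{Norm-measurability fails exactly where it matters.} For a diagonal projection $\Pbf_X=\sum_{x\in X}\Pbf_{e_x}$ one has $\Sbf_j^*\Pbf_X\Sbf_j=\Pbf_{X-j}$. Distinct diagonal projections are at norm distance $1$. So no simple functions approximate $j\mapsto\Pbf_{X-j}$ in measure once $\mu$ vanishes on the level sets of $j\mapsto X-j$ (e.g.\ $X$ with trivial stabiliser and $\mu$ vanishing on singletons).

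\emph{The fallback is circular.} Reading $\Braket{\Phbf_\mu\rho,\Abf}$ as $\Braket{\rho,\mathbf{B}}$ replaces the definition of $\Phbf_\mu$ (a scalar integral of $j\mapsto\Braket{\rho,\Sbf_j^*\Abf\Sbf_j}$) by the conclusion. The paper's proof makes the same move when it calls $\Abf_\mu$ ``the Pettis integral'' and pairs it with an arbitrary state. That presupposes a Pettis integral in $\Bcal(\ell_2(\kappa))$ with respect to the whole dual $\Bcal(\ell_2(\kappa))^*$, which need not exist.

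\emph{Counterexample.} Take $\kappa=\R$ with the usual addition. Choose non-principal ultrafilters $\mathcal{U}\ni\{4^a:a\in\N\}$ and $\mathcal{V}\ni\{2\cdot 4^b:b\in\N\}$ on $\N\subset\R$. Put $\mu(E)=1$ if $E\cap\N\in\mathcal{U}$ and $\mu(E)=0$ otherwise. Let $\rho=\lim_{k\to\mathcal{V}}\rho_{e_k}$ (weak-$*$), let $\nu$ be the image of $\mathcal{V}$ under $k\mapsto e_k$, and let $X=\{4^a+2\cdot4^b:a>b\}$. By Lemma~\ref{l:int-ultralim} and uniqueness of binary expansions,
\[
\Braket{\Phbf_\mu\rho,\Pbf_X}=\lim_{j\to\mathcal{U}}\lim_{k\to\mathcal{V}}\chi_X(j+k)=0,
\qquad
\Braket{\Phbf_\mu'\rho,\Pbf_X}=\lim_{k\to\mathcal{V}}\lim_{j\to\mathcal{U}}\chi_X(j+k)=1.
\]
This is just the non-commutativity of $\beta\N$.

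\emph{$\sigma$-additivity of $\mu$ does not rescue it.} Let $D$ be a $\kappa$-complete non-principal ultrafilter on a measurable $\kappa$. Realise $\bigoplus_{i<\kappa}\Z_2$ as the finite subsets of $\kappa\times\{0,1\}$ under symmetric difference. Let $\mu$ (two-valued) and $\mathcal{V}$ be the images of $D$ under $\alpha\mapsto\{(\alpha,0)\}$ and $\beta\mapsto\{(\beta,1)\}$, with $\rho,\nu$ as before. Let $X=\{\{(\alpha,0),(\beta,1)\}:\alpha<\beta\}$. Since $D$ contains no set of cardinality $<\kappa$, the two values become $1$ and $0$ respectively.

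\emph{What your argument does prove.} $\Phbf_\mu'$ is well defined, independent of $\nu$, and equal to $\rho\mapsto\rho\circ\mathbf{B}$. It agrees with $\Phbf_\mu$ on normal states: there $j\mapsto\sum_n\lambda_n(\Sbf_ju_n,\Abf\Sbf_ju_n)$ is a uniform limit of finite sums, and these commute with the $\mu$-integral. On all of $\Sigma(\Bcal(\ell_2(\kappa)))$, however, equality would force $\rho\mapsto\Braket{\Phbf_\mu\rho,\Abf}$ to be weak-$*$ continuous, which fails in the examples above.
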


\begin{proof}
    Fix any \(\Abf\in\Bcal(\ell_2(\kappa))\) and consider the operator \(\Abf_\mu\) given by the Pettis integral
    \[
        \Abf_\mu = \int \Sbf_j^*\Abf\Sbf_j\,d\mu(j);
    \]
    strictly speaking, this is the action of the dual of the channel \(\Phbf_\mu\) on $\Abf$ in the Heisenberg representation.
    Then
    \[
        \Braket{\Phbf_\mu'\rho, \Abf} =
        \int d\nu(u)\int \Braket{\rho_u, \Sbf_j^*\Abf\Sbf_j}\,d\mu(j) =
        \Braket{\rho, \Abf_\mu} =
        \int \Braket{\rho, \Sbf_j^*\Abf\Sbf_j}\,d\mu(j) = \Braket{\Phbf_\mu\rho, \Abf}.
    \]
\end{proof}

If the measure \(\mu\) had not been defined on all subsets of \(\kappa\), then the channels might not have coincided; moreover, the channel \(\Phbf_\mu\) might have been undefined, as in \cite{DzhenzherDzhenzherSakbaev26}. So, the (real-valued Ulam) measurability of the cardinal is a very strong condition.

\section{Quantum channels preserving purity}\label{s:purity}

In this section, we turn our attention to the geometric and facial structure of the state space under the action of the constructed quantum channels. Specifically, we investigate the conditions under which the averaging operators preserve the purity of states, mapping pure vector states into pure quantum states. As we shall demonstrate, this property is intimately connected to the algebraic structure of the channel's support and is heavily constrained by the cardinality of the continuum $\mathfrak{c}$. By employing excision criteria for states on operator algebras, we establish structural limits on the existence of non-trivial purity-preserving singularising channels.

\begin{theorem}
    Let \(\kappa \leq \continuum\).
    Let $\mu\colon2^\kappa\to[\,0,1\,]$ be a $\sigma$-additive measure that vanishes on singletons.
    Then for any $\sigma$-additive state $\rho\in\Sigma(\Bcal(\ell_2(\kappa)))$, the outcome \(\Phbf_\mu\rho\) is not pure.
    In particular, this holds for any normal quantum state~\(\rho\in\Sigma_n(\Bcal(\ell_2(\kappa)))\), and for any vector pure state.
\end{theorem}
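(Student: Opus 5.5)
The plan is to exhibit $\Phbf_\mu\rho$ explicitly as a non-trivial convex combination of two different states, by splitting the measure $\mu$ into two halves.

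\textbf{Step 1: $\mu$ is atomless.} Since $\kappa\leq\continuum$ and a two-valued measurable cardinal is inaccessible and hence $>\continuum$, the measure $\mu$ cannot have atoms. Indeed, an atom $A$ would make $B\mapsto \mu(B\cap A)/\mu(A)$ a $\{0,1\}$-valued $\sigma$-additive measure vanishing on singletons on a set of size $\leq\kappa$, which is impossible by Ulam's dichotomy \cite{Jech-p1-ch10}. Consequently, for any $t\in[\,0,1\,]$ there is $A\subset\kappa$ with $\mu(A)=t$.

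\textbf{Step 2: the decomposition.} For every $A$ with $0<\mu(A)<1$ put $\mu_A:=\mu(\cdot\cap A)/\mu(A)$. This is again $\sigma$-additive and vanishes on singletons. Linearity of the integral gives
\[
    \Phbf_\mu\rho=\mu(A)\,\Phbf_{\mu_A}\rho+\mu(\kappa\setminus A)\,\Phbf_{\mu_{\kappa\setminus A}}\rho .
\]
Both summands are states. It therefore suffices to find one $A$ for which $\Phbf_{\mu_A}\rho\neq\Phbf_{\mu_{\kappa\setminus A}}\rho$.

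\textbf{Step 3: testing on diagonal projectors.} I would test on the projectors $\Pbf_C$ onto $\overline{\mathrm{span}}\{e_i:i\in C\}$, for $C\subset\kappa$. Since $\Sbf_j^*\Pbf_C\Sbf_j=\Pbf_{C-j}$, we get
\[
    \Braket{\Phbf_{\mu_A}\rho,\Pbf_C}=\int m_\rho(C-j)\,d\mu_A(j),
    \qquad m_\rho(D):=\Braket{\rho,\Pbf_D}.
\]
Because $\rho$ is $\sigma$-additive, $m_\rho$ is a $\sigma$-additive probability measure on $2^\kappa$.

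\textbf{Step 4: reduction to a.e. constancy.} Suppose, towards a contradiction, that the two halves agree for every admissible $A$. Then for each fixed $C$ the function $g_C(j):=m_\rho(C-j)$ has the same $\mu$-average over every set $A$ of measure in $(0,1)$. By Step 1, every set of positive measure contains such subsets. Hence $g_C$ is $\mu$-a.e. equal to the constant $c_C:=\Braket{\Phbf_\mu\rho,\Pbf_C}$, for every $C\subset\kappa$.

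\textbf{Step 5: the contradiction (main obstacle).} This is the step I expect to be hardest. The approach is to transfer everything to $[\,0,1\,]$, using $\kappa\leq\continuum$ and Ulam's theorem that a $\sigma$-additive measure on all subsets of a set of size $\leq\continuum$ is "Borel-like". Fix an injection $\kappa\hookrightarrow\{0,1\}^{\N}$, so that the coordinate sets $C_n$ form a countable separating family.

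Apply Step 4 to the countably many Boolean combinations of translates of the $C_n$. All the a.e.-identities then hold simultaneously off one $\mu$-null set $N$. Now choose $C$ adapted to $\mu$:
\begin{itemize}
    \item Use the countable separating family to build a set $C$ of $\mu$-measure $1/2$, the construction being Lebesgue-type since everything is countably generated.
    \item Compare $\int g_C\,d\mu$ with $\mu$ concentrated on the half-sets $A=C$ and $A=\kappa\setminus C$, using Fubini for the $\sigma$-additive product $m_\rho\otimes\mu$. This is legitimate because all sets are in the countably generated $\sigma$-algebra.
\end{itemize}
The aim is to show that $j\mapsto m_\rho(C-j)$ cannot be a.e. constant for every $C$. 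That would force the convolution-type measure $m_\rho*\mu$ to be "translation-invariant" on a countably generated $\sigma$-algebra, and this is impossible for a $\sigma$-additive probability on a non-compact-type group.

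As a sanity check, the vector state $\rho=\rho_{e_0}$ gives $g_C=\Ical_C$ directly, and choosing $C=A$ distinguishes the two halves ($1$ versus $0$). The general case should reduce to this one via the decomposition of $m_\rho$ into its discrete (countably supported) part and its atomless part. If Step 5 proves stubborn, I would try first to treat the discrete part exactly as in the sanity check, and then handle the atomless part by choosing $A$ correlated with a set on which $m_\rho*\mu$ is concentrated.

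Finally, normal states and vector states are $\sigma$-additive, which gives the "in particular" part of the statement.
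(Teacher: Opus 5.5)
Your Steps~1--4 are correct. Atomlessness of $\mu$ follows from Ulam's dichotomy exactly as you argue, and the splitting $\Phbf_\mu\rho=\mu(A)\Phbf_{\mu_A}\rho+\mu(\kappa\setminus A)\Phbf_{\mu_{\kappa\setminus A}}\rho$ is valid. If all such splittings were trivial, every $g_C(j)=m_\rho(C-j)$ would indeed be $\mu$-a.e.\ constant. The gap is Step~5. It carries the whole content of the theorem, it is not actually carried out, and the mechanism you sketch fails at three points.

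\emph{Countability.} The Boolean combinations of the translates $C_n-j$ do not form a countable family, since there are $\kappa$ translates. So you cannot collect the exceptional null sets into one null set $N$; recall that $\kappa$ itself is a union of $\kappa$ null singletons.

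\emph{Fubini.} Fubini for $m_\rho\otimes\mu$ requires $\{(x,j):x+j\in C\}$ to lie in a product $\sigma$-algebra. But the group law on $\kappa$ is arbitrary and unrelated to your injection $\kappa\hookrightarrow\{0,1\}^{\N}$. Moreover, under the hypotheses of the theorem one has $2^\kappa\otimes2^\kappa\neq2^{\kappa\times\kappa}$. Indeed, take a non-null set $A$ of minimal cardinality, well-ordered in type $|A|$; its order relation has iterated integrals $0$ and $\mu(A)^2$. So exchanging the order of integration is genuinely unavailable for general $C$.

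\emph{Translation invariance.} The closing claim that translation invariance on a countably generated $\sigma$-algebra is impossible is unfounded. Suppose $\kappa=\continuum$ carries the group law of $\Z_2^{\N}$, and the restrictions of $m_\rho$ and $\mu$ to Borel sets are Haar measure; nothing in the hypotheses excludes this. Then $j\mapsto m_\rho(C-j)$ is \emph{exactly} constant for every Borel $C$. So a distinguishing $C$, if one exists, must be non-Borel, which is precisely where your tools are unavailable.

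Your idea can be completed when $m_\rho$ has an atom, say $m_\rho(\{x_0\})=a>0$. This covers all normal states, hence the ``in particular'' part. Set $\lambda(C):=\int g_C\,d\mu$. It is $\sigma$-additive and vanishes on singletons, hence is atomless by your Step~1, so $\kappa=C_1\sqcup\dots\sqcup C_N$ with every $\lambda(C_k)<a$. Some $k$ has $\mu(C_k-x_0)>0$, and $g_{C_k}\geq a$ on $C_k-x_0$. Hence $g_{C_k}$ is not $\mu$-a.e.\ equal to its mean $\lambda(C_k)$.

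For atomless $m_\rho$ (e.g.\ $\rho$ singular and $\sigma$-additive), however, you would be proving something strictly stronger than the theorem: that this particular decomposition is non-trivial. It is not clear that this is true.

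The missing idea is to argue from properties of the output rather than from an explicit decomposition, as the paper does. By Theorems~\ref{t:pres-sig-add} and~\ref{t:sing-out}, $\Phbf_\mu\rho$ is a singular $\sigma$-additive state. By \cite[Lemma~5.2]{Blecher-Weaver}, a singular $\sigma$-additive pure state on $\Bcal(\ell_2(\kappa))$ can exist only if $\kappa>\continuum$.
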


\begin{proof}
    By theorems~\ref{t:pres-sig-add} and~\ref{t:sing-out}, \(\Phbf_\mu\) maps $\sigma$-additive states into singular $\sigma$-additive states.
    By \cite[Lemma~5.2]{Blecher-Weaver}, if there exists a singular $\sigma$-additive pure state, then \(\kappa > \continuum\).
\end{proof}

Below, the object of our interest is whether \(\Phbf_\mu\) preserves purity regardless of $\kappa$ being greater than $\continuum$, or not.

\begin{theorem}
    Let $\mu\colon2^\kappa\to[\,0,1\,]$ be a measure.
    If $\mu$ is not two-valued, then \(\Phbf_\mu\) does not preserve the purity of any vector pure state.
\end{theorem}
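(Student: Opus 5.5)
The plan is to exhibit $\Phbf_\mu\rho_u$ as a nontrivial convex combination of two distinct states by splitting $\kappa$ along a set of intermediate measure. Since $\mu$ is not two-valued, there is $A\subset\kappa$ with $t:=\mu(A)\in(0,1)$. Define the restricted normalised measures $\mu_A(B):=\mu(B\cap A)/t$ and $\mu_{A^c}(B):=\mu(B\setminus A)/(1-t)$. Both are finitely additive probability measures on $2^\kappa$, and $\mu = t\mu_A + (1-t)\mu_{A^c}$. Linearity of the integral in the measure, at the level of simple functions and hence for integrable ones, gives
\[
    \Phbf_\mu\rho_u = t\,\Phbf_{\mu_A}\rho_u + (1-t)\,\Phbf_{\mu_{A^c}}\rho_u ,
\]
and both summands are states. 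It therefore remains to show that $\Phbf_{\mu_A}\rho_u\neq\Phbf_{\mu_{A^c}}\rho_u$ for a suitable choice of $A$. Then $\Phbf_\mu\rho_u$ is not an extreme point of $\Sigma(\Bcal(\ell_2(\kappa)))$.

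To separate the two states, I will test them against diagonal projectors. For $C\subset\kappa$, let $\Pbf_C$ be the projector onto $\overline{\operatorname{span}}\{e_i : i\in C\}$. Then $\Sbf_j^*\Pbf_C\Sbf_j=\Pbf_{C-j}$, so
\[
    \braket{\Sbf_j\rho_u\Sbf_j^*,\Pbf_C} = \sum_{i\in C-j}\abs{u(i)}^2 .
\]
The simplest case is when $u$ is a basis vector $e_{i_0}$. This quantity is then the indicator of $j\in C-i_0$. Choosing $C:=A+i_0$ gives $\braket{\Phbf_{\mu_A}\rho_u,\Pbf_C}=\mu_A(A)=1$ and $\braket{\Phbf_{\mu_{A^c}}\rho_u,\Pbf_C}=\mu_{A^c}(A)=0$, so the two states differ.

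For a general unit vector $u$ the integrand is a smeared indicator, and I need to find a set $C$ that distinguishes the two states. My first attempt: fix $i_0\in\supp u$ and translate $u$ so that $i_0=0$. Take $C=A+i_0$ and compare $\int_A f\,d\mu$ with $\int_{A^c} f\,d\mu$, where $f(j)=\sum_{i\in A-j}\abs{u(i)}^2$. If these normalised averages coincide for every admissible $A$, that is, for every set of intermediate measure, I will derive a contradiction. The tool is to refine $A$: intersect it with translates $A\cap(A+k)$ for $k\in\supp u-\supp u$, a countable family, to produce sets of intermediate measure on which $f$ is forced close to $\abs{u(i_0)}^2$ or close to $1$.

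A cleaner alternative, which I would try first, avoids this refinement. Pure states are multiplicative on a maximal abelian subalgebra only in special cases, so instead I would use the conditional-expectation map to the diagonal $\ell_\infty(\kappa)$. The restriction of $\Phbf_\mu\rho_u$ to $\ell_\infty(\kappa)$ is the convolution $\mu*\abs{u}^2$, a finitely additive probability measure on $\kappa$. It suffices to show that this restriction is not the same for $\mu_A$ and $\mu_{A^c}$, which holds since $\abs{u}^2$ is a countably supported probability vector, and convolution by it is injective on finitely additive measures up to that countable smearing.

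The main obstacle is exactly this injectivity-type step for a general $u$. The basis-vector case is immediate, but for general $u$ one must rule out that the two convolutions agree. If the direct approach stalls, I would pick $A$ adapted to $u$, which is possible because $\mu$ is defined on all subsets.
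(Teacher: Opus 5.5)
Your splitting $\mu=t\mu_A+(1-t)\mu_{A^c}$ and your basis-vector argument are correct, and in that case they coincide with the paper's proof: the paper tests the two halves against the projector onto $\overline{\operatorname{span}}\{\Sbf_j u : j\in A\}$, which for $u=e_{i_0}$ is exactly your $\Pbf_{A+i_0}$. The gap is the general-$u$ step, and you cannot close it, because in the generality of the paper the statement is false for general $u$.

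\emph{Counterexample.} Take the group $\bigoplus_{i<\kappa}\Z_2$, which the paper explicitly allows, and any $g\neq 0$, so that $g+g=0$. Let $\mu=\frac12(\delta_0+\delta_g)$, which is not two-valued, and $u=(e_0+e_g)/\sqrt2$. Then $\Sbf_g u=u$, so $\Phbf_\mu\rho_u=\frac12\rho_u+\frac12\rho_{\Sbf_g u}=\rho_u$ is pure.

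Here every $A$ with $0<\mu(A)<1$ contains exactly one of $0$ and $g$. Hence $\Phbf_{\mu_A}\rho_u=\Phbf_{\mu_{A^c}}\rho_u=\rho_u$ for every admissible $A$. Your injectivity claim fails, since $\delta_0*\abs{u}^2=\delta_g*\abs{u}^2$, and no refinement or adapted choice of $A$ can give a contradiction. The paper's own proof has the same hole. It asserts that $\langle\rho_{\Sbf_j u},\Pbf_A\rangle=0$ for $\mu$-almost every $j\in B$. That holds for basis vectors, but it fails here at $j=g$, which carries mass $\frac12$.

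Testing only on the diagonal $\ell_\infty(\kappa)$ is also too weak, even without torsion. Take $\kappa=\R$ and $\mu(E):=m(E\cap\Z)$ with $m$ a translation-invariant mean on $\Z$. Take $u=(e_0+e_1)/\sqrt2$ and $A=2\Z$. Then $\Phbf_{\mu_A}\rho_u$ and $\Phbf_{\mu_{A^c}}\rho_u$ both restrict on the diagonal to the same functional $\sum_i d_i\Pbf_{e_i}\mapsto\int d_j\,d\mu(j)$. Yet they differ on the partial isometry $\Abf$ with $\Abf e_n=e_{n+1}$ for $n\in2\Z$ and $\Abf e_i=0$ otherwise, taking the values $\frac12$ and $0$.

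So what your argument, and the paper's, actually establishes is the basis-vector version: if $\mu$ is not two-valued, then $\Phbf_\mu\rho_{e_i}$ is not pure for every $i$. This is the natural counterpart of Theorem~\ref{t:purity}. Extending it to arbitrary vector states needs additional hypotheses on the group or on $u$; the counterexample above exploits an element $g\neq0$ with $\Sbf_g u=u$. A sharper choice of $A$ alone will not do it.
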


\begin{proof}
    Since $\mu$ is not two-valued, there exists a decomposition \(A \sqcup B=\kappa\) such that
    \[
        0 < \mu(A),\mu(B) < 1.
    \]
    For a vector pure state \(\rho=\rho_u\), consider two states
    \[
        \rho_A := \frac{1}{\mu(A)}\int_A \Sbf_j\rho\Sbf_j^*\,d\mu(j)
        \qquad\text{and}\qquad
        \rho_B := \frac{1}{\mu(B)}\int_B \Sbf_j\rho\Sbf_j^*\,d\mu(j).
    \]
    Then
    \[
        \Phbf_\mu \rho = \mu(A)\rho_A + \mu(B)\rho_B,
    \]
    and it remains to prove that \(\rho_A\neq \rho_B\).

    Let \(\Hcal_A\) be a (closed) subspace generated by \(\Sbf_j u\) for \(j\in A\).
    Let \(\Pbf_A\) be the projector onto \(\Hcal_A\).
    Then
    \[
        \Braket{\rho_A, \Pbf_A} = \frac{1}{\mu(A)}\int_A \Braket{\rho_{\Sbf_j u}, \Pbf_A}\,d\mu(j) = \frac{1}{\mu(A)}\int_A 1\,d\mu(j) = 1.
    \]
    For $\mu$-almost every \(j \in B\), we have \(\Braket{\rho_{\Sbf_j u}, \Pbf_A}=0\).
    Hence,
    \[
        \Braket{\rho_B, \Pbf_A} = \frac{1}{\mu(B)}\int_B \Braket{\rho_{\Sbf_j u}, \Pbf_A}\,d\mu(j) = 0.
    \]
    So, \(\Braket{\rho_A, \Pbf_A}=1\neq 0 =\Braket{\rho_B, \Pbf_A}\), and thus \(\rho_A\neq\rho_B\).
\end{proof}

In order to establish whether \(\Phbf_\mu\) preserves purity for
a two-valued measure \(\mu\), we need to develop the theory of pure states. We will follow the notations from \cite{Blecher-Weaver}.


A family \(\Fcal \subset \Pcal(\Hcal)\cap\Mcal\) of projectors is called a \emph{quantum filter} if the following two properties hold:
\begin{enumerate}[label=\arabic*.]
    \item if \(\Pbf \in \Fcal\) and \(\Pbf \leq \Pbf'\in\Pcal(\Hcal)\cap\Mcal\), then \(\Pbf'\in\Fcal\),
    \item if \(\Pbf_1,\ldots,\Pbf_n\in\Fcal\), then \(\norm{\Pbf_1\ldots\Pbf_n}=1\).
\end{enumerate}
For example, for a state \(\rho\in\Sigma(\Mcal)\), the family
\[
    \Fcal_\rho := \left\{\Pbf \in\Pcal(\Hcal)\cap\Mcal \;:\;  \Braket{\rho, \Pbf}=1\right\}
\]
is a quantum filter; it is not empty since, for example, it contains the identity operator.
We will not use quantum filters except \(\Fcal_\rho\).

\begin{proposition}\label{p:qfilter}
    Let \(\rho\in\Sigma(\Mcal)\) be a quantum state on a von Neumann algebra \(\Mcal\).
    Then for any \(\Pbf\in\Fcal_\rho\) and \(\Abf\in\Mcal\) we have
    \[
        \Braket{\rho, \Abf} = \Braket{\rho, \Pbf\Abf} = \Braket{\rho, \Abf\Pbf}.
    \]
\end{proposition}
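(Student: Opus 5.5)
The plan is to reduce the statement to the vanishing of $\rho$ on the complementary projector $\Ibf-\Pbf$, and to get that vanishing from the Cauchy--Schwarz inequality for the positive functional $\rho$. Fix $\Pbf\in\Fcal_\rho$ and put $\Qbf:=\Ibf-\Pbf$, which is a projector lying in $\Mcal$. Since $\rho$ is a state, $\Braket{\rho,\Ibf}=1$, so
\[
    \Braket{\rho,\Qbf}=\Braket{\rho,\Ibf}-\Braket{\rho,\Pbf}=1-1=0.
\]
By linearity, $\Braket{\rho,\Abf}=\Braket{\rho,\Pbf\Abf}+\Braket{\rho,\Qbf\Abf}$ and $\Braket{\rho,\Abf}=\Braket{\rho,\Abf\Pbf}+\Braket{\rho,\Abf\Qbf}$. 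It therefore suffices to show $\Braket{\rho,\Qbf\Abf}=0$ and $\Braket{\rho,\Abf\Qbf}=0$ for every $\Abf\in\Mcal$.

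For this, use that a positive functional on a C$^*$-algebra satisfies the Cauchy--Schwarz inequality
\[
    \abs{\Braket{\rho,\Bbf^*\Cbf}}^2\leq\Braket{\rho,\Bbf^*\Bbf}\,\Braket{\rho,\Cbf^*\Cbf}.
\]
This is the standard consequence of the sesquilinear form $(\Bbf,\Cbf)\mapsto\Braket{\rho,\Bbf^*\Cbf}$ being positive semidefinite. Take $\Bbf=\Abf$ and $\Cbf=\Qbf$. Since $\Qbf^*\Qbf=\Qbf^2=\Qbf$, we get
\[
    \abs{\Braket{\rho,\Abf^*\Qbf}}^2\leq\Braket{\rho,\Abf^*\Abf}\,\Braket{\rho,\Qbf}=0.
\]
Replacing $\Abf$ by $\Abf^*$ gives $\Braket{\rho,\Abf\Qbf}=0$ for all $\Abf\in\Mcal$.

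For the other side, positivity of $\rho$ gives hermiticity: $\Braket{\rho,\Xbf^*}=\overline{\Braket{\rho,\Xbf}}$. Hence
\[
    \Braket{\rho,\Qbf\Abf}=\overline{\Braket{\rho,\Abf^*\Qbf}}=0.
\]
The only ingredients are the Cauchy--Schwarz inequality for states and hermiticity, both standard, so I expect no real obstacle. The one point to handle carefully is the adjoint bookkeeping, so that both orders $\Pbf\Abf$ and $\Abf\Pbf$ are covered; I would state the Cauchy--Schwarz inequality explicitly with a brief justification.
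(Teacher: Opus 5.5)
Your proof is correct and is essentially the paper's argument: both show $\Braket{\rho,\Ibf-\Pbf}=0$ and then use the Cauchy--Schwarz inequality for the positive functional $\rho$ to make the cross terms vanish. The only cosmetic difference is that the paper applies Cauchy--Schwarz directly to $(\Ibf-\Pbf)\Abf$ and gets the other order by symmetry, whereas you apply it to $\Abf^*(\Ibf-\Pbf)$ and obtain the left-multiplied term from hermiticity of $\rho$.
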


\begin{proof}
    Due to the symmetry of the equation, it is sufficient to prove the first equality.
    Let
    \[
        \Pbf^\perp = \Ibf - \Pbf
    \]
    the orthogonal projector.
    By linearity,
    \[
        \Braket{\rho, \Pbf^\perp} = \Braket{\rho, \Ibf} - \Braket{\rho, \Pbf} = 1 - 1 = 0.
    \]
    Then, by the Cauchy--Schwarz inequality \cite[Lemma~4.7]{Farah-Wofsey},
    \[
        \abs{\Braket{\rho, \Pbf^\perp\Abf}}^2 \leq \Braket{\rho, \Pbf^\perp}\Braket{\rho, \Abf^*\Abf} = 0.
    \]
    Thus
    \[
        \Braket{\rho, \Pbf^\perp\Abf} = 0,
    \]
    and so
    \[
        \Braket{\rho, \Abf} = 
        \Braket{\rho, \Pbf\Abf} + \Braket{\rho, \Pbf^\perp\Abf} = \Braket{\rho, \Pbf\Abf}.
    \]
\end{proof}

In \cite{Akemann-Anderson-Pedersen-1986}, the criterion of purity  was given in terms of excision.
We give a slightly different definition of excision, as in \cite{Blecher-Weaver}.
So, let $\rho\in\Sigma(\Mcal)$ be a quantum state on a von Neumann algebra \(\Mcal\).
For \(\Abf\in\Mcal\), we say that \defing{\(\Pbf\in\Fcal_\rho\) excises \(\rho\) for \(\Abf\)}, if
\[
    \Pbf\Abf\Pbf = \Braket{\rho, \Abf}\Pbf.
\]

\begin{lemma}[Excision Criterion]\label{l:excision}
    Let $\rho\in\Sigma(\Hcal)$ be a quantum state on a von Neumann algebra $\Mcal$.
    If for any \(\Abf\in\Mcal\) there exists a projector \(\Pbf\in\Fcal_\rho\) that excises $\rho$ for \(\Abf\), then $\rho$ is pure.
    The converse holds if $\rho$ is $\sigma$-additive.
\end{lemma}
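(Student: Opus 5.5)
\emph{Sufficiency.} Suppose that every $\Abf\in\Mcal$ admits an excising projector in $\Fcal_\rho$, and write $\rho=t\rho_1+(1-t)\rho_2$ with $\rho_1,\rho_2\in\Sigma(\Mcal)$ and $0<t<1$. First I show $\Fcal_\rho\subset\Fcal_{\rho_1}\cap\Fcal_{\rho_2}$. If $\Braket{\rho,\Pbf}=1$, then, since $\Braket{\rho_i,\Pbf}\leq 1$ and the convex combination of these two numbers equals $1$, both are equal to $1$. Now fix $\Abf$ and a projector $\Pbf\in\Fcal_\rho$ excising $\rho$ for $\Abf$. Applying Proposition~\ref{p:qfilter} twice to the state $\rho_1$ and $\Pbf\in\Fcal_{\rho_1}$ gives
\[
\Braket{\rho_1,\Abf}=\Braket{\rho_1,\Pbf\Abf\Pbf}=\Braket{\rho,\Abf}\Braket{\rho_1,\Pbf}=\Braket{\rho,\Abf}.
\]
Hence $\rho_1=\rho$, and similarly $\rho_2=\rho$, so $\rho$ is pure.

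\emph{Converse.} Let $\rho$ be pure and $\sigma$-additive, and fix $\Abf$. Replacing $\Abf$ by $\Abf-\Braket{\rho,\Abf}\Ibf$, we may assume $\Braket{\rho,\Abf}=0$, so we need a $\Pbf\in\Fcal_\rho$ with $\Pbf\Abf\Pbf=0$.

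Step 1. I invoke the approximate excision theorem for pure states of C$^*$-algebras \cite{Akemann-Anderson-Pedersen-1986}, a consequence of Kadison transitivity. It gives, for each $n$, a positive $a_n\in\Mcal$ with $\norm{a_n}=1$, $\Braket{\rho,a_n}=1$ and $\norm{a_n\Abf a_n}<1/n$.

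Step 2. I replace $a_n$ by a projector. Let $\Pbf_n$ be the spectral projection of $a_n$ for $[\,1/2,1\,]$. Then
\[
1=\Braket{\rho,a_n}\leq\Braket{\rho,\Pbf_n}+\tfrac12\Braket{\rho,\Ibf-\Pbf_n}=1-\tfrac12\Braket{\rho,\Ibf-\Pbf_n},
\]
so $\Pbf_n\in\Fcal_\rho$. Moreover $\Pbf_n=a_n g(a_n)$ with $\norm{g(a_n)}\leq 2$, hence $\norm{\Pbf_n\Abf\Pbf_n}\leq 4/n$.

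Step 3 (the main obstacle). I show that $\Fcal_\rho$ is closed under countable meets; this is the only place where $\sigma$-additivity is used. Put $\Qbf_n=\Ibf-\Pbf_n$, so that $\Braket{\rho,\Qbf_n}=0$. I claim $\Braket{\rho,\Qbf}=0$ for $\Qbf=\bigvee_n\Qbf_n$. Consider the positive operator $\Dbf=\sum_n 2^{-n}\Qbf_n$; it converges in norm, so $\Braket{\rho,\Dbf}=0$, and its range projection is $\Qbf$. Since $\Dbf\geq 0$ and $\norm{\Dbf}\leq 1$, we can write
\[
\Qbf=\sum_{k\geq 1}\Ebf_k,\qquad \Ebf_k:=\text{spectral projection of $\Dbf$ for }\bigl(2^{-k},2^{-k+1}\bigr],
\]
an orthogonal countable sum. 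Each $\Ebf_k\leq 2^k\Dbf$, so $\Braket{\rho,\Ebf_k}=0$. By $\sigma$-additivity, $\Braket{\rho,\Qbf}=\sum_k\Braket{\rho,\Ebf_k}=0$. Care is needed here because the $\Qbf_n$ do not commute, which is exactly why I pass through the single operator $\Dbf$ rather than disjointifying the $\Qbf_n$ directly.

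Step 4. Let $\Pbf=\Ibf-\Qbf=\bigwedge_n\Pbf_n$. By Step~3, $\Pbf\in\Fcal_\rho$. Since $\Pbf\leq\Pbf_n$ for every $n$, we get $\Pbf=\Pbf\Pbf_n$, and therefore
\[
\norm{\Pbf\Abf\Pbf}=\norm{\Pbf(\Pbf_n\Abf\Pbf_n)\Pbf}\leq\norm{\Pbf_n\Abf\Pbf_n}\leq 4/n
\]
for all $n$. Hence $\Pbf\Abf\Pbf=0=\Braket{\rho,\Abf}\Pbf$, i.e.\ $\Pbf$ excises $\rho$ for $\Abf$.
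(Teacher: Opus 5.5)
Your proof is correct. The sufficiency half is the paper's argument: the convexity squeeze gives $\Fcal_\rho\subset\Fcal_{\rho_1}\cap\Fcal_{\rho_2}$, and then Proposition~\ref{p:qfilter} yields $\Braket{\rho_1,\Abf}=\Braket{\rho_1,\Pbf\Abf\Pbf}=\Braket{\rho,\Abf}$.

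For the converse the paper gives no argument and simply cites \cite[Lemma~4.5]{Blecher-Weaver}. You instead reconstruct a proof in three moves:
\begin{enumerate}[label=(\roman*)]
\item approximate excision from \cite{Akemann-Anderson-Pedersen-1986}, giving positive norm-one $a_n$ with $\Braket{\rho,a_n}=1$;
\item rounding each $a_n$ to the spectral projection $\chi_{[1/2,1]}(a_n)\in\Fcal_\rho$, at the cost of a factor $4$;
\item taking the meet of these countably many projections.
\end{enumerate}
Your Steps~3--4 follow the same mechanism as the paper's later Lemma~\ref{l:excision-appx} (Approximative Excision). The difference is that the paper invokes \cite[Theorem~4.1]{Blecher-Weaver} to know that $\Fcal_\rho$ is closed under countable meets when $\rho$ is $\sigma$-additive. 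You prove this directly through $\Dbf=\sum_n 2^{-n}(\Ibf-\Pbf_n)$ and its dyadic spectral projections, which cleanly avoids the non-commutativity of the $\Pbf_n$.

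What your route buys is transparency. It shows that $\sigma$-additivity is used only to close $\Fcal_\rho$ under countable meets. It also replaces the Blecher--Weaver citation by the more classical Akemann--Anderson--Pedersen excision theorem, though that is still an external input, so neither version is fully self-contained. The paper's citation is shorter but gives the reader no indication of why $\sigma$-additivity is needed.
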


\begin{proof}
    The ``converse'' part is exactly \cite[Lemma~4.5]{Blecher-Weaver}, so we need to prove the ``if'' part.
    Take a decomposition
    \[
        \rho = \lambda\rho_1 + (1-\lambda)\rho_2
    \]
    for some \(0 < \lambda < 1\) and \(\rho_1\neq\rho_2\).
    Fix an arbitrary $\Abf\in\Mcal$.
    Take the projector \(\Pbf\in\Fcal_\rho\) that excises $\rho$ for \(\Abf\).
    Applying the action of \(\rho_1\) on
    \[
        \Pbf\Abf\Pbf = \Braket{\rho, \Abf}\Pbf,
    \]
    we obtain
    \[
        \Braket{\rho_1, \Pbf\Abf\Pbf} = \Braket{\rho, \Abf}\Braket{\rho_1, \Pbf} = \Braket{\rho, \Abf}.
    \]
    Since
    \[
        1 = \Braket{\rho, \Pbf} = \lambda\Braket{\rho_1, \Pbf} + (1-\lambda)\Braket{\rho_2, \Pbf} \leq
        \lambda + (1-\lambda) = 1,
    \]
    we obtain that both quantum filters \(\Fcal_{\rho_1}\) and \(\Fcal_{\rho_2}\) contain \(\Pbf\).
    Then by Proposition~\ref{p:qfilter},
    \[
        \Braket{\rho_1, \Abf} = \Braket{\rho_1, \Pbf\Abf\Pbf} = \Braket{\rho, \Abf}.
    \]
    Due to the arbitrary choice of \(\Abf\), we obtain that \(\rho=\rho_1\).
    Analogously, \(\rho=\rho_2\), which concludes the proof.
\end{proof}

\begin{example}\label{ex:vec-non-pure}
    Consider the Hilbert space \(\Hcal := \Cx^2\) of qubits, and a von Neumann algebra \(\Mcal\) of diagonal matrices.
    Let \(u = \frac{1}{\sqrt{2}}\begin{pmatrix}
        1&1
    \end{pmatrix}\), and \(\rho = \rho_u\).
    There are only four projectors in \(\Mcal\), and only \(\Ibf\) of them all lies in \(\Fcal_\rho\).
    Take
    \[
        \Abf = \begin{pmatrix}
            1 & 0 \\ 0 & 0
        \end{pmatrix}.
    \]
    Then \(\Braket{\rho, \Abf} = \frac{1}{2}\), and
    \[
        \Ibf\Abf\Ibf = \Abf \neq \Braket{\rho, \Abf}\Ibf.
    \]
    Hence, by Lemma~\ref{l:excision}, the vector state \(\rho_u\) is not pure.
\end{example}

In light of the previous example, the following proposition does not seem so trivial.

\begin{proposition}\label{p:excise-pv}
    Let \(u\in S_1(\Hcal)\).
    Let \(\rho = \rho_u\in\Sigma(\Mcal)\) be a vector state, and \(\Pbf_u \in \Mcal\).
    Then $\rho_u$ is pure if and only if the projector \(\Pbf_u\) on $u$ excises \(\rho_u\) for any \(\Abf\in\Mcal\).
\end{proposition}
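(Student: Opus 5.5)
The ``if'' direction is immediate from the ``if'' part of Lemma~\ref{l:excision}. Since \(\Braket{\rho_u,\Pbf_u} = \abs{(u,u)}^2 = 1\), we have \(\Pbf_u\in\Fcal_{\rho_u}\). So if \(\Pbf_u\) excises \(\rho_u\) for every \(\Abf\in\Mcal\), the hypothesis of the lemma holds with the single projector \(\Pbf=\Pbf_u\), and \(\rho_u\) is pure. Here we do not need \(\Mcal\) to contain anything beyond \(\Pbf_u\).

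For the ``only if'' direction, observe that \(\rho_u\) is normal, hence \(\sigma\)-additive. The converse part of Lemma~\ref{l:excision} therefore applies: for each \(\Abf\in\Mcal\) there is some \(\Pbf\in\Fcal_{\rho_u}\) with \(\Pbf\Abf\Pbf = \Braket{\rho_u,\Abf}\Pbf\). The remaining task is to transfer excision from \(\Pbf\) to \(\Pbf_u\). Since \((u,\Pbf u)=1\) and \(\Pbf\) is a projector, \(\norm{\Pbf^\perp u}^2 = 1-1 = 0\), so \(\Pbf u = u\). Because \(\Pbf_u\) is the rank-one projector onto \(u\), this gives \(\Pbf\Pbf_u = \Pbf_u = \Pbf_u\Pbf\), i.e.\ \(\Pbf_u\leq\Pbf\). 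Multiplying the excision identity by \(\Pbf_u\) on both sides yields
\[
    \Pbf_u\Abf\Pbf_u = \Pbf_u\Pbf\Abf\Pbf\Pbf_u = \Braket{\rho_u,\Abf}\Pbf_u\Pbf\Pbf_u = \Braket{\rho_u,\Abf}\Pbf_u,
\]
which says exactly that \(\Pbf_u\) excises \(\rho_u\) for \(\Abf\).

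There is also a direct check that bypasses the lemma: for any \(\Abf\) and \(v\), \(\Pbf_u\Abf\Pbf_u v = (u,v)(u,\Abf u)u = \Braket{\rho_u,\Abf}\Pbf_u v\). Thus \(\Pbf_u\) always excises a vector state, and the assumption \(\Pbf_u\in\Mcal\) is what places it in \(\Fcal_{\rho_u}\). This shortcut makes the ``only if'' direction trivial as well. I would present the argument through the lemma to match the paper, and record the direct computation as a remark.

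The only step that needs real care is \(\Pbf u=u\) together with the resulting comparison \(\Pbf_u\leq\Pbf\). It is also the step that explains why Example~\ref{ex:vec-non-pure} does not contradict the statement: there \(\Pbf_u\notin\Mcal\), so the hypothesis fails.
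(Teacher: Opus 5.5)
Your proof is correct, but your main route for the ``only if'' direction is not the paper's. The paper proves that direction by exactly the computation you put in a remark: \(\Pbf_u\in\Fcal_{\rho_u}\) and \(\Pbf_u\Abf\Pbf_u v = (u,v)(u,\Abf u)u = \Braket{\rho_u,\Abf}\Pbf_u v\). The ``if'' direction is handled as you do it, by Lemma~\ref{l:excision}. So your stated intention to present the argument ``through the lemma to match the paper'' misreads the paper; to match it, lead with the direct computation.

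The paper's route is self-contained and never uses purity, \(\sigma\)-additivity, or the Blecher--Weaver converse. It therefore shows that once \(\Pbf_u\in\Mcal\), both sides of the equivalence always hold, so every such vector state is pure.

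Your route instead gets some excising \(\Pbf\in\Fcal_{\rho_u}\) from the converse of Lemma~\ref{l:excision} and pushes excision down to \(\Pbf_u\). The step \(\Pbf u=u\Rightarrow\Pbf_u\leq\Pbf\) is sound and gives two worthwhile by-products:
\begin{itemize}
\item \(\Pbf_u\) is the least element of \(\Fcal_{\rho_u}\);
\item excision for \(\Abf\) passes from \(\Pbf\) to any smaller projector in the filter.
\end{itemize}
However, it uses heavy external machinery to prove an identity that holds trivially. It also suggests, misleadingly, that purity is doing work in that direction.
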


\begin{proof}
    The ``if'' part is by Lemma~\ref{l:excision}, so we need to prove the ``only if'' part.
    It is clear that \(\Pbf_u \in \Fcal_{\rho_u}\).
    Let $\Abf\in\Mcal$.
    So it is sufficient to prove that for any $v \in \Hcal$,
    \[
        \Pbf_u\Abf\Pbf_u v = \Braket{\rho, \Abf}\Pbf_u v.
    \]
    We have
    \[
        \Pbf_u v = (u,v)u.
    \]
    Hence
    \[
        \Pbf_u\Abf\Pbf_u v = (u,v) (u, \Abf u) u.
    \]
    The equality \(\Braket{\rho_u, \Abf} = (u, \Abf u)\) implies the result.
\end{proof}

We will need more technical criteria for purity.

\begin{lemma}[Approximative Excision]\label{l:excision-appx}
    Let \(\rho\in\Sigma(\Mcal)\) be a $\sigma$-additive state on a von Neumann algebra $\Mcal$.
    Then $\rho$ is pure if and only if for any \(\Abf\in\Mcal\) there exists a countable family of projectors \(\Pbf_n \in \Fcal_\rho\) such that
    \[
        \lim_{n\to\infty} \norm{\Pbf_n(\Abf - \Braket{\rho, \Abf}\Ibf)\Pbf_n} = 0.
    \]
\end{lemma}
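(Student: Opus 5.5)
The plan is to reduce both directions to the Excision Criterion (Lemma~\ref{l:excision}). For a $\sigma$-additive state, that lemma says purity is equivalent to exact excision. The remaining work is to pass between a sequence of approximately excising projectors and a single exactly excising one, using $\sigma$-additivity of $\rho$.

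The ``only if'' direction is immediate. If $\rho$ is pure and $\sigma$-additive, the converse part of Lemma~\ref{l:excision} gives, for each $\Abf$, a projector $\Pbf\in\Fcal_\rho$ with $\Pbf\Abf\Pbf=\Braket{\rho,\Abf}\Pbf$. Taking $\Pbf_n:=\Pbf$ for all $n$ makes each norm $\norm{\Pbf_n(\Abf-\Braket{\rho,\Abf}\Ibf)\Pbf_n}$ equal to zero.

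For the ``if'' direction, I would avoid producing exact excisors and argue directly, as in the proof of Lemma~\ref{l:excision}.
\begin{enumerate}[label=\arabic*.]
    \item Suppose $\rho=\lambda\rho_1+(1-\lambda)\rho_2$ with $0<\lambda<1$, and fix $\Abf\in\Mcal$.
    \item Take the given projectors $\Pbf_n\in\Fcal_\rho$. Since $1=\Braket{\rho,\Pbf_n}$ is a convex combination of the two numbers $\Braket{\rho_i,\Pbf_n}\le 1$, both equal $1$. Hence $\Pbf_n\in\Fcal_{\rho_1}\cap\Fcal_{\rho_2}$.
    \item Proposition~\ref{p:qfilter}, applied to $\rho_1$, gives
    \[
        \Braket{\rho_1,\Abf}-\Braket{\rho,\Abf}=\Braket{\rho_1,\Pbf_n(\Abf-\Braket{\rho,\Abf}\Ibf)\Pbf_n}.
    \]
    \item Since $\rho_1$ has norm one, the modulus of the right-hand side is at most $\norm{\Pbf_n(\Abf-\Braket{\rho,\Abf}\Ibf)\Pbf_n}$, which tends to $0$.
    \item Therefore $\Braket{\rho_1,\Abf}=\Braket{\rho,\Abf}$. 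As $\Abf$ is arbitrary, $\rho_1=\rho$, and symmetrically $\rho_2=\rho$, so $\rho$ is pure.
\end{enumerate}
This direction does not even use $\sigma$-additivity.

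The potential obstacle would have been having to build an exact excisor from the sequence, for instance via a meet $\bigwedge_n \Pbf_n$. That would require $\sigma$-additivity to show the meet still lies in $\Fcal_\rho$, and it is not clear that the approximate excision survives compression to the meet. The direct argument above sidesteps this entirely. So $\sigma$-additivity enters only through the cited converse of Lemma~\ref{l:excision} in the ``only if'' direction.
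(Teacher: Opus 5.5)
Your proof is correct. The ``only if'' direction coincides with the paper's, but your ``if'' direction takes a different route.

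\textbf{The paper's route.} The paper sets $\Pbf := \bigwedge_n \Pbf_n$. It uses $\sigma$-additivity through \cite[Theorem~4.1]{Blecher-Weaver} to get $\Braket{\rho,\Pbf^\perp}=\Braket{\rho,\bigvee_n\Pbf_n^\perp}=0$, so $\Pbf\in\Fcal_\rho$. It then checks that $\Pbf$ excises $\rho$ for $\Abf$ exactly, and applies Lemma~\ref{l:excision}.

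The step you flagged as unclear is actually the easy one. Since $\Pbf\le\Pbf_n$, we have $\Pbf=\Pbf\Pbf_n=\Pbf_n\Pbf$, hence $\norm{\Pbf(\Abf-\Braket{\rho,\Abf}\Ibf)\Pbf}\le\norm{\Pbf_n(\Abf-\Braket{\rho,\Abf}\Ibf)\Pbf_n}\to0$. The nontrivial input is the membership $\bigwedge_n\Pbf_n\in\Fcal_\rho$. For a general state, the projections on which it vanishes are not closed even under finite joins, so this is exactly where $\sigma$-additivity is needed.

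\textbf{Your route.} You rerun the proof of Lemma~\ref{l:excision} with the approximate excisors:
\begin{enumerate}[label=(\roman*)]
\item each $\Pbf_n$ lies in $\Fcal_{\rho_1}\cap\Fcal_{\rho_2}$;
\item Proposition~\ref{p:qfilter} rewrites $\Braket{\rho_1,\Abf}-\Braket{\rho,\Abf}$ as $\Braket{\rho_1,\Pbf_n(\Abf-\Braket{\rho,\Abf}\Ibf)\Pbf_n}$;
\item $\norm{\rho_1}=1$ finishes.
\end{enumerate}

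\textbf{Comparison.} Your argument is more elementary and avoids the external theorem. It also proves more: the ``if'' direction holds for every state, without $\sigma$-additivity and without countability. Projectors in $\Fcal_\rho$ whose compressions have arbitrarily small norm suffice. In exchange, the paper's route yields a structural fact: for $\sigma$-additive $\rho$, a sequence of approximate excisors merges into a single exact excisor in $\Fcal_\rho$. The lemma as stated does not need this.
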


\begin{proof}
    The ``only if'' part holds by Lemma~\ref{l:excision} (Excision Criterion) since $\rho$ is $\sigma$-additive, and we may take $\Pbf_n$ being the same projector excising $\rho$ for $\Abf$.
    So we need to prove the ``if'' part.
    
    Fix any $\Abf\in\Mcal$.
    Denote
    \[
        \Pbf := \bigwedge_n \Pbf_n.
    \]
    Since \(\rho\) is $\sigma$-additive, by \cite[Theorem~4.1]{Blecher-Weaver}
    \[
        \Braket{\rho, \Pbf^\perp} = \Braket{\rho, \bigvee_n\Pbf_n^\perp} = 0,
    \]
    and so \(\Pbf \in \Fcal_\rho\).
    Thus, by Lemma~\ref{l:excision} (Excision Criterion), it is sufficient to prove that
    \[
        \Pbf\Abf\Pbf = \Braket{\rho, \Abf}\Pbf.
    \]
    Since \(\Pbf = \Pbf_n\Pbf = \Pbf\Pbf_n\), we have
    \[
        \norm{\Pbf(\Abf - \Braket{\rho, \Abf}\Ibf)\Pbf} \leq \norm{\Pbf_n(\Abf - \Braket{\rho, \Abf}\Ibf)\Pbf_n} \xrightarrow[n\to\infty]{} 0,
    \]
    which concludes the proof.
\end{proof}

Now we focus our attention on $\sigma$-additive two-valued measures \(2^\kappa\to\{0,1\}\) vanishing on singletons; they exist \cite{Jech-p1-ch10} if and only if $\kappa$ is a Ulam measurable cardinal.
Recall that there is a bijection between all two-valued measures \(2^\kappa\to\{0,1\}\) and ultrafilters on~\(\kappa\).
This means that to each two-valued measure \(\mu\colon2^\kappa\to\{0,1\}\) corresponds an ultrafilter \(\Ucal_\mu\) of sets of measure~$1$.
It is clear that
\begin{itemize}[nosep]
    \item $\mu$ is $\sigma$-additive if and only if $\Ucal_\mu$ is $\sigma$-complete (which means that \(\Ucal_\mu\) is closed under countable intersections), and
    \item $\mu$ vanishes on singletons if and only if $\Ucal_\mu$ is non-principal.
\end{itemize}
There exists an even stronger connection between such measures and ultrafilters. The following result is probably well known, but we do not know the reference for it.

\begin{lemma}\label{l:int-ultralim}
    Let \(\mu\colon2^\kappa\to\{0,1\}\) be a two-valued measure.
    Then for any bounded function \(f\colon\kappa\to\Cx\), the integral over $\mu$ and the limit along $\Ucal_\mu$ coincide, meaning
    \[
        \lim_{j\to \Ucal_\mu} f(j) = \int f(j)\,d\mu(j).
    \]
\end{lemma}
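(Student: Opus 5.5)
The plan is to go through simple functions: prove the equality for them directly, and then pass to general bounded $f$ by uniform approximation, using that both sides are linear and $1$-Lipschitz in the sup norm.

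First I will check that the ultralimit exists. Since $f$ is bounded, its values lie in a compact disc $K\subset\Cx$. The image filter $f(\Ucal_\mu)$ is an ultrafilter on the compact Hausdorff space $K$, so it converges to a unique point. Hence $L(f):=\lim_{j\to\Ucal_\mu} f(j)$ is well defined, and $L(f)$ is characterised by
\[
    \{j : \abs{f(j)-L(f)}<\varepsilon\}\in\Ucal_\mu \quad\text{for every }\varepsilon>0.
\]
From this characterisation, $L$ is linear and satisfies $\abs{L(f)-L(g)}\leq\sup\abs{f-g}$.

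Next comes the case of a simple function $f=\sum_{m=1}^M c_m\Ical\{f=c_m\}$ with distinct values $c_m$. The level sets partition $\kappa$, so by finite additivity their $\mu$-measures sum to $1$. Since $\mu$ is two-valued, exactly one level set $\{f=c_{m_0}\}$ has measure $1$, meaning it belongs to $\Ucal_\mu$. Therefore
\[
    \int f\,d\mu=c_{m_0}.
\]
On the other hand, $f\equiv c_{m_0}$ on a set in $\Ucal_\mu$, so $L(f)=c_{m_0}$ as well.

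For general bounded $f$, I partition $K$ into finitely many Borel pieces of diameter less than $1/n$ and choose a point in each piece. This gives simple functions $f_n$ with $\sup\abs{f_n-f}\leq 1/n$. Such $f_n$ are Cauchy in mean and converge to $f$ uniformly, hence in measure. So $f$ is integrable in the Dunford--Schwartz sense, and
\[
    \int f\,d\mu=\lim_n\int f_n\,d\mu=\lim_n L(f_n)=L(f),
\]
where the last equality uses the Lipschitz estimate for $L$.

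The only step needing any care is this last one: it must match the paper's definition of the integral via convergence in measure and Cauchy in mean. I will note that uniform convergence implies both conditions, and that the integral does not depend on the approximating sequence (Dunford--Schwartz). With that, the remaining steps are routine.
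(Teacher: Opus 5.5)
Your proof is correct, but it takes a different route from the paper's.

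\textbf{The paper's argument.} It reduces to real-valued $f$ and fixes $\varepsilon>0$. It sets $A_\varepsilon=\{j: L-\varepsilon<f(j)<L+\varepsilon\}$, which lies in $\Ucal_\mu$. Since the complement $\kappa\setminus A_\varepsilon$ is $\mu$-null and $f$ is bounded, $\int f\,d\mu=\int_{A_\varepsilon}f\,d\mu$. Monotonicity of the integral then sandwiches this value between $L-\varepsilon$ and $L+\varepsilon$.

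\textbf{Your argument.} You treat both sides as linear functionals on bounded functions that are $1$-Lipschitz in the sup norm. You check that they agree on simple functions, because exactly one level set lies in $\Ucal_\mu$. You then extend to all bounded $f$ by uniform approximation.

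\textbf{Comparison.} The paper's estimate is shorter. However, it tacitly assumes that every bounded $f$ is integrable in the Dunford--Schwartz sense, and that this integral is monotone and additive over disjoint sets. Your argument works straight from the definition of the integral via simple functions. As a result it also proves along the way that bounded functions are integrable, and it handles complex values directly without reducing to the real case. The only facts you need are that the integral does not depend on the approximating sequence and the bound $\abs{\int g\,d\mu}\leq\sup\abs{g}$ for simple $g$.
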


\begin{proof}
    It is sufficient to prove the case \(f\colon\kappa\to\R\).
    Since $f$ is bounded, there exists a unique
    \[
        L := \lim_{j\to \Ucal_\mu} f(j) \in \R.
    \]
    By definition of $L$, for any $\varepsilon>0$ holds
    \[
        A_\varepsilon := \{j\in\kappa:L-\varepsilon < f(j) < L+\varepsilon\} \in \Ucal_\mu.
    \]
    The latter means that \(\mu(A_\varepsilon)=1\).
    So,
    \[
        L-\varepsilon<\int_{A_\varepsilon} f(j)\,d\mu(j) < L+\varepsilon.
    \]
    The proof is concluded by the fact that
    \[
        \int f(j)\,d\mu(j) = \int_{A_\varepsilon} f(j)\,d\mu(j) + \int_{\kappa\setminus A_\varepsilon} f(j)\,d\mu(j) = \int_{A_\varepsilon} f(j)\,d\mu(j)
    \]
    and the arbitrary choice of $\varepsilon$.
\end{proof}

\begin{theorem}\label{t:purity}
    Let $\mu\colon2^\kappa\to\{0,1\}$ be a $\sigma$-additive two-valued measure.
    Then \(\Phbf_\mu\) preserves the purity of pure vector states constructed on basis vectors $e_i$.
\end{theorem}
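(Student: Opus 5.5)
The plan is to compute $\Phbf_\mu\rho_{e_i}$ explicitly as an ultralimit of diagonal matrix entries, and then to verify purity through the ``if'' direction of Lemma~\ref{l:excision} (Excision Criterion). That direction needs no $\sigma$-additivity. It only requires us to produce, for each $\Abf$, an excising projector, and we look for one among the diagonal projectors $\Pbf_X$ onto $\overline{\mathrm{span}}\{e_k : k\in X\}$ for $X\subset\kappa$.

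\emph{Step 1: identify the state.} Write $\rho := \rho_{e_i}$ and $\rho' := \Phbf_\mu\rho$, and set $a_{kl} := (e_k,\Abf e_l)$. For any $\Abf\in\Bcal(\ell_2(\kappa))$ we have $\Braket{\rho',\Abf} = \int a_{i+j,i+j}\,d\mu(j)$. Since $k\mapsto k-i$ is a bijection of $\kappa$, the sets $\mathcal V := \{X\subset\kappa : X - i \in \Ucal_\mu\}$ form again a $\sigma$-complete ultrafilter. By Lemma~\ref{l:int-ultralim},
\[
    \Braket{\rho',\Abf} = \lim_{k\to\mathcal V} a_{kk} =: L.
\]
In particular $\Braket{\rho',\Pbf_X} = 1$ for every $X\in\mathcal V$, so $\Pbf_X\in\Fcal_{\rho'}$ for all such $X$.

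\emph{Step 2: the diagonal.} For each $n$ the set $\{k : \abs{a_{kk}-L}<1/n\}$ lies in $\mathcal V$. Since $\mathcal V$ is closed under countable intersections, the set $D := \{k : a_{kk} = L\}$ also lies in $\mathcal V$. This is the first place where $\sigma$-additivity of $\mu$ is used.

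\emph{Step 3: the off-diagonal part, which I expect to be the main obstacle.} We need some $X\in\mathcal V$ with $a_{kl}=0$ for all distinct $k,l\in X$. Consider the graph $G$ on $\kappa$ in which $k\sim l$ whenever $k\neq l$ and $a_{kl}\neq0$ or $a_{lk}\neq 0$. Each column $\Abf e_l$ and each row $\Abf^*e_k$ belongs to $\ell_2(\kappa)$ and so has countable support. Hence every vertex of $G$ has countable degree, and consequently every connected component of $G$ is countable. Enumerate each component injectively by $\N$ and use this enumeration as a colouring. This splits $\kappa$ into countably many sets $C_m$, $m\in\N$, where $C_m$ consists of the vertices receiving number $m$ in their component. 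Each $C_m$ is independent in $G$: two distinct vertices of $C_m$ either lie in different components or get different numbers. Since $\kappa = \bigsqcup_m C_m$ and $\mathcal V$ is a $\sigma$-complete ultrafilter, some $C_m$ belongs to $\mathcal V$; otherwise all complements $\kappa\setminus C_m$ would be in $\mathcal V$, and so would their empty intersection. This is the second essential use of $\sigma$-additivity.

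\emph{Step 4: conclude.} Put $X := D\cap C_m\in\mathcal V$. Then $\Pbf_X\Abf\Pbf_X$ has matrix entries $a_{kl}$ for $k,l\in X$. These vanish off the diagonal by independence of $C_m$ and equal $L$ on the diagonal by the choice of $D$. Hence
\[
    \Pbf_X\Abf\Pbf_X = L\,\Pbf_X = \Braket{\rho',\Abf}\Pbf_X,
\]
and $\Pbf_X\in\Fcal_{\rho'}$ by Step~1. Thus $\Pbf_X$ excises $\rho'$ for $\Abf$. Since $\Abf$ was arbitrary, Lemma~\ref{l:excision} shows that $\rho' = \Phbf_\mu\rho_{e_i}$ is pure.
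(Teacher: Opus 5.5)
Your proof is correct, and it takes a genuinely different route from the paper's.

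\textbf{The paper's route.} The paper writes $\Phbf_\mu\rho_{e_0}$ as the ultraweak limit $\lim_{j\to\Ucal_\mu}\rho_{e_j}$. It then invokes Theorem~5.1 of Blecher--Weaver as a black box, reducing purity on $\Bcal(\ell_2(\kappa))$ to purity on the diagonal subalgebra. There it applies the Approximative Excision Lemma~\ref{l:excision-appx} to the projectors onto the sets $X_\varepsilon\in\Ucal_\mu$. That lemma in turn needs two further inputs: the $\sigma$-additivity of the output state (Theorem~\ref{t:pres-sig-add}), and Blecher--Weaver's Theorem~4.1 to pass to the meet $\bigwedge_n\Pbf_n$.

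\textbf{Your route.} You work with an arbitrary $\Abf\in\Bcal(\ell_2(\kappa))$ from the start and let the $\sigma$-completeness of the ultrafilter do both jobs elementarily.
\begin{enumerate}
\item Intersecting the sets $\{k:\abs{a_{kk}-L}<1/n\}$ gives exact equality on the diagonal. This replaces approximate excision, since meets of diagonal projectors are just intersections of index sets; no appeal to Theorem~4.1 is needed.
\item The countable colouring of the support graph of $\Abf$ is the genuinely new ingredient. Components are countable because rows and columns of $\Abf$ have countable support, and $\sigma$-completeness then selects one independent colour class. This kills all off-diagonal entries on a set in the ultrafilter. It is exactly the step the paper outsources to Theorem~5.1.
\end{enumerate}
The result is an exact diagonal excising projector for every $\Abf$. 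So you need only the elementary ``if'' half of Lemma~\ref{l:excision}, and you need neither Theorem~\ref{t:pres-sig-add} nor any external result.

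\textbf{Comparison.} The paper's argument is shorter given its citations. Yours is self-contained and shows precisely where $\sigma$-additivity of $\mu$ enters: twice, in Steps~2 and~3. Your construction also gives a bit more. Combined with Proposition~\ref{p:qfilter}, it shows that $\Phbf_\mu\rho_{e_i}$ is the unique state agreeing with it on the diagonal subalgebra. Indeed, any such state $\psi$ satisfies $\psi(\Pbf_X)=1$, hence $\psi(\Abf)=\psi(\Pbf_X\Abf\Pbf_X)=L$.
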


\begin{proof}
    It is sufficient to prove the theorem for a pure vector state \(\rho = \rho_{e_0}\), since for others they differ only by shift.
    Take an ultrafilter $\Ucal_\mu$.
    Then, by Lemma~\ref{l:int-ultralim},
    \[
        \Phbf_\mu \rho = \lim_{j\to\Ucal_\mu} \Sbf_j\rho_{e_0}\Sbf_j^* =
        \lim_{j\to\Ucal_\mu} \rho_{e_j},
    \]
    where the ultralimit is in the ultraweak sense.
    By \cite[Theorem~5.1]{Blecher-Weaver}, it is sufficient to prove that \(\Phbf_\mu \rho\) is pure on the diagonal subalgebra.
    Fix any diagonal \(\Abf\in\Bcal(\Hcal)\) and \(\varepsilon>0\).
    In order to apply Lemma~\ref{l:excision-appx} (Approximative Excision), we need to construct the excision projector.
    Let
    \[
        \Abf = \sum_{j\in\kappa} a_j\Pbf_{e_j}.
    \]
    By definition of ultralimit,
    \[
        X_\varepsilon := \left\{j\in \kappa : \abs{a_j - \Braket{\Phbf_\mu\rho,\Abf}} < \varepsilon\right\} \in \Ucal_\mu.
    \]
    Thus, we are able to define the projector
    \[
        \Pbf_\varepsilon := \sum_{j \in X_\varepsilon} \Pbf_{e_j}.
    \]
    Since \(X_\varepsilon \in \Ucal_\mu\),
    \[
        \Braket{\Phbf_\mu\rho, \Pbf_\varepsilon} =
        \lim_{j\to\Ucal_\mu} (e_j, \Pbf_\varepsilon e_j) =
        \lim_{X_\varepsilon \ni j\to\Ucal_\mu} (e_j, \Pbf_\varepsilon e_j) =
        \lim_{X_\varepsilon \ni j\to\Ucal_\mu} 1 = 1,
    \]
    and so \(\Pbf_\varepsilon \in \Fcal_{\Phbf_\mu \rho}\).

    Denote
    \[
        \Dbf := \Abf - \Braket{\Phbf_\mu\rho, \Abf}\Ibf =
        \sum_{i\in\kappa} d_i\Pbf_{e_i},
    \]
    where
    \[
        d_i := a_i-\Braket{\Phbf_\mu\rho, \Abf}.
    \]
    It is clear that
    \[
        \Pbf_\varepsilon\Dbf\Pbf_\varepsilon =
        \sum_{i\in\kappa}\sum_{j\in Y_\varepsilon}\sum_{k\in X_\varepsilon} d_i\Pbf_{e_j}\Pbf_{e_i}\Pbf_{e_k} =
        \sum_{j\in X_\varepsilon} d_j\Pbf_{e_j}.
    \]
    Since \(\abs{d_j}< \varepsilon\) for \(j\in X_\varepsilon\), 
    so is \(\norm{\Pbf_\varepsilon\Dbf\Pbf_\varepsilon}\leq\varepsilon\).
    Thus, Lemma~\ref{l:excision-appx} (Approximative Excision) concludes the proof.
\end{proof}

\begin{example}\label{ex:purity}
    Theorem~\ref{t:purity} does not have to hold for arbitrary pure vector state.
    Indeed, let \(\kappa=\R\), and take \(u = \frac{e_0 + e_1}{\sqrt{2}}\).
    Let $+$ be the usual summation.
    For an ultrafilter \(\Ucal\) on \(\R\), define
    \[
        \rho_\Ucal := \lim_{j\to\Ucal_\mu} \rho_{e_j}.
    \]
    For any diagonal operator $\Dbf\in\Bcal(\ell_2(\R))$,
    \[
        \Braket{\rho_u, \Dbf} = \frac{1}{2}(e_0,\Dbf e_0) + \frac{1}{2}(e_1,\Dbf e_1).
    \]
    Then
    \[
        \Phbf_\mu\rho_u = \frac{1}{2}\rho_{\Ucal_\mu} + \frac{1}{2}\rho_{\Ucal_\mu - 1}.
    \]
    It remains to notice that $\Ucal_\mu$ is not translation invariant: indeed, for example, if
    \[
        X := \bigsqcup_{n=-\infty}^\infty [\,2n, 2n+2)
    \]
    lies in \(\Ucal_\mu\), then \(X+1 = \R\setminus X\) does not, and vise versa.
    So, taking \(\Dbf\) to be the projector on \(X\), we see that
    \[
        \Braket{\rho_{\Ucal_\mu},\Dbf} = 1
        \quad\text{and}\quad
        \Braket{\rho_{\Ucal_\mu-1},\Dbf} = 0,
    \]
    or vise versa. So, \(\rho_{\Ucal_\mu}\neq\rho_{\Ucal_\mu-1}\), and thus \(\Phbf_\mu\rho_u\) is not pure.
\end{example}

\section{Dynamics on unitary groups}\label{s:unitary}

In the previous sections, our focus was on the group of translations \(\{\Sbf_j\}_{j\in \kappa}\).
In this section, we extend our considerations from the specific shift operators to a broader class of unitary representations of an abstract group \((G, \cdot)\).
So, let \(\Ucal(\Hcal)\) be the group of unitary operators on a Hilbert space \(\Hcal\).
Let \((G,\cdot)\) be some other group; we will omit the \(\cdot\) operation.
Denote by $2^G$ the set of all subsets of the group~$G$.
Let \(\mu\colon 2^{G}\to[\,0,1\,]\) be a finitely-additive measure on this group.
Let \(\pi\colon G\to\Ucal(\Hcal)\subset GL(\Hcal)\) be a representation of this group.
Define the quantum channel \(\Qbf_{\mu,\pi}\colon \Sigma(\Bcal(\Hcal))\to\Sigma(\Bcal(\Hcal))\) by the Pettis integral
\[
    \Qbf_{\mu,\pi} \rho := \int \pi(g)\rho\pi(g)^*\,d\mu(g).
\]
Note that the channel \(\Phbf_\mu\) is a particular case of the channel \(\Qbf_{\mu,\pi}\) for a measure $\mu$ concentrated on the family of shift operators.
In \cite[Theorem~3.1]{Amosov-Sakbaev-2025}, it was shown that if $\Hcal$ is separable and \(\mu\) is $\sigma$-additive, then \(\Qbf_{\mu,\pi}\) preserves normality; see also \cite[Theorem~6.3]{Amosov-Sakbaev-2025} for the channel constructed by the representation over a topological group.

In order to prevent any set-theoretic or measure-theoretic ambiguities, we explicitly distinguish the domain and properties of the measure \(\mu \) as follows.
\begin{itemize}[nosep]
    \item When analysing the preservation of \(\sigma\)-additivity, \(\mu\) is assumed to be a \(\sigma\)-additive measure defined on the full power set \(2^{G}\). Such $\sigma$-additive measures always exist: we may always take convex combinations of delta measures (note that we do not require here that $\mu$ vanishes on singletons or is left-invariant).
    Note that if \(\kappa\) carries a \(\sigma\)-additive measure \(\mu\) vanishing on singletons, such a measure can be naturally extended to a \(\sigma\)-additive measure on the full power set of any larger group \(G\) containing \(\kappa\) (such as \(\mathcal{U}(\ell^2(\kappa))\)) by setting \(\mu_G(E) = \mu(E \cap \kappa)\) for \(E \subset G\) (in other words, if \(\kappa\) is Ulam measurable, then so is \(\Ucal(\ell_2(\kappa))\)). This ensures the existence of \(\sigma\)-additive measures on \(2^{G}\) required for Theorem~\ref{t:sigma-additive-on-group}, although such extended measures are supported on the subset \(\kappa \subset G\). See \cite[Chapter~IX]{Kuratowski-1967set} for details.
    \item When analysing the fixed points under the action of the group and using left-invariance, \(\mu\) is assumed to be a finitely-additive left-invariant measure defined on the full power set \(2^{G}\). We explicitly emphasise that \(\mu\) in Theorem~\ref{t:left-invarinant-iter} is not required to be \(\sigma\)-additive, which perfectly aligns with the classical non-existence results for left-invariant $\sigma$-additive measures on power sets.
\end{itemize}

\begin{theorem}\label{t:sigma-additive-on-group}
    Let \(\Hcal\) be a Hilbert space.
    Let \(\mu\colon 2^G\to[\,0,1\,]\) be a $\sigma$-additive measure on a group~$G$.
    Let \(\pi\colon G\to\Ucal(\Hcal)\).
    Then the channel \(\Qbf_{\mu,\pi}\) preserves $\sigma$-additivity.
\end{theorem}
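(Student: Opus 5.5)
The plan is to repeat the argument of Theorem~\ref{t:pres-sig-add}, with the shifts \(\Sbf_j\) replaced by the unitaries \(\pi(g)\). Fix a $\sigma$-additive state \(\rho\in\Sigma(\Bcal(\Hcal))\) and a countable family of pairwise orthogonal projectors \(\Pbf_n\in\Bcal(\Hcal)\). By the definition of the Pettis integral,
\[
    \Braket{\Qbf_{\mu,\pi}\rho, \sum_n \Pbf_n} = \int \Braket{\rho, \pi(g)^*\Bigl(\sum_n\Pbf_n\Bigr)\pi(g)}\,d\mu(g).
\]

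The first step is purely algebraic. Since \(\pi(g)\) is unitary, the operators \(\Qbf_n(g) := \pi(g)^*\Pbf_n\pi(g)\) are again pairwise orthogonal projectors. Indeed, \(\Qbf_n(g)\Qbf_m(g) = \pi(g)^*\Pbf_n\Pbf_m\pi(g)\), which is \(0\) for \(n\neq m\) and \(\Qbf_n(g)\) for \(n=m\). Moreover, conjugation by a unitary is strongly continuous on bounded sets, so it commutes with strongly convergent sums, and we get \(\pi(g)^*(\sum_n \Pbf_n)\pi(g) = \sum_n \Qbf_n(g)\). Applying the $\sigma$-additivity of \(\rho\) pointwise in \(g\) gives
\[
    \Braket{\rho, \pi(g)^*\Bigl(\sum_n \Pbf_n\Bigr)\pi(g)} = \sum_n f_n(g), \qquad f_n(g) := \Braket{\rho, \pi(g)^*\Pbf_n\pi(g)} \geq 0 .
\]

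The second step is to interchange the sum and the integral. All \(f_n\) are non-negative, and the partial sums \(F_N := \sum_{n\leq N} f_n\) increase to \(F := \sum_n f_n\). The functions satisfy \(0 \leq F_N \leq F \leq 1\), and \(F\) is integrable because it is the integrand defining \(\Braket{\Qbf_{\mu,\pi}\rho,\sum_n\Pbf_n}\). Since \(\mu\) is $\sigma$-additive and defined on all of \(2^G\), every bounded function on \(G\) is measurable and integrable. Hence the Levi monotone convergence theorem applies and gives \(\int F\,d\mu = \lim_N \int F_N\,d\mu = \sum_n \int f_n\,d\mu\). By the definition of \(\Qbf_{\mu,\pi}\), the right-hand side equals \(\sum_n \Braket{\Qbf_{\mu,\pi}\rho, \Pbf_n}\), which is the required $\sigma$-additivity of \(\Qbf_{\mu,\pi}\rho\).

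The main point needing care is this interchange step. The paper defines the integral for finitely-additive measures through approximation by simple functions, so one has to check that for a $\sigma$-additive \(\mu\) on the full power set this integral agrees with the usual Lebesgue integral; only then is monotone convergence available. I would argue as follows. A bounded function is a uniform limit of simple functions, and both integrals are determined by their values on simple functions together with uniform limits of them. Therefore they coincide, and the Lebesgue monotone convergence theorem transfers to the integral used in the paper. No other property of \(\pi\) is used beyond unitarity of each \(\pi(g)\); in particular, the representation property and any continuity of \(\pi\) play no role.
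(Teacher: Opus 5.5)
Your proof is correct and follows the same route as the paper, which simply defers to the proof of Theorem~\ref{t:pres-sig-add}. That route is: unfold the Pettis integral, apply the $\sigma$-additivity of $\rho$ to the conjugated projectors $\pi(g)^*\Pbf_n\pi(g)$, and swap sum and integral by monotone convergence. Your additional checks fill in details the paper leaves implicit: that unitary conjugation preserves pairwise orthogonality and strong sums, and that for a $\sigma$-additive $\mu$ on $2^G$ the finitely-additive integral of bounded functions agrees with the Lebesgue integral.
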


\begin{proof}
    The proof is analogous to the proof of Theorem~\ref{t:pres-sig-add}.
\end{proof}

\begin{example}
    Here we show that the analogue of Theorem~\ref{t:sing-out} may not hold for \(\Qbf_{\mu,\pi}\) for some representation $\pi$.
    Indeed, let \(\pi(g):=\Ibf\).
    Then
    \[
        \Braket{\Qbf_{\mu,\pi} \rho,\Abf} = \int \Braket{\rho,\Abf}\,d\mu(g) = \Braket{\rho,\Abf}\mu(G) = \Braket{\rho,\Abf},
    \]
    which means that \(\Qbf_{\mu,\pi} \rho = \rho\) regardless of $\mu$, and there is no singularisation.
\end{example}

As shown in Example~\ref{ex:purity}, the channel \(\Qbf_{\mu,\pi}\) does not necessarily preserve purity.
However, this quantum channel is interesting in another way.

The measure $\mu$ on the group \(G\) is said to be \defing{left-invariant} if \(\mu(E) = \mu(g E)\) for any \(g\in G\) and \(E\subset G\).
See, for example, \cite{Rudin-1991-fa} for the introduction to Haar measures.
In the case of discrete groups, the left-invariant measure is the counting measure, which is obviously $\sigma$-additive; this is the case when \(\Qbf_{\mu,\pi}\) is given by the Kraus representation.
Below, speaking of left-invariant measures, we do not presume that they are $\sigma$-additive.

We say that a state \(\rho\in\Bcal(\Hcal)\) is a \emph{fixed point of any unitary evolution} if \(\rho=\Ubf\rho\Ubf^*\) for any \(\Ubf\in\Ucal(\Hcal)\).

\begin{theorem}\label{t:left-invarinant-iter}
    Let \(\Hcal\) be an infinite-dimensional Hilbert space.
    Let \(\mu\colon 2^G\to[\,0,1\,]\) be left-invariant.
    Then \(\Qbf_{\mu,\pi}^2=\Qbf_{\mu,\pi}\).
\end{theorem}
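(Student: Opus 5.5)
We work in the Heisenberg picture and show that the dual map $\Qbf_{\mu,\pi}^*$ is idempotent. For $\Abf\in\Bcal(\Hcal)$, set
\[
    \Qbf_{\mu,\pi}^*(\Abf) := \int \pi(g)^*\Abf\pi(g)\,d\mu(g),
\]
understood as a Pettis (weak) integral. For $u,v\in\Hcal$, the function $g\mapsto (u,\pi(g)^*\Abf\pi(g)v)$ is bounded by $\norm{\Abf}\norm{u}\norm{v}$, so it is $\mu$-integrable. The resulting sesquilinear form is bounded, and therefore defines an operator with $\norm{\Qbf_{\mu,\pi}^*(\Abf)}\leq\norm{\Abf}$. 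By definition of the channel, $\braket{\Qbf_{\mu,\pi}\rho,\Abf}=\int\braket{\rho,\pi(g)^*\Abf\pi(g)}\,d\mu(g)$. The first step is to check that this equals $\braket{\rho,\Qbf_{\mu,\pi}^*(\Abf)}$ for every state $\rho$, not only for vector states. I would argue this as in the lemma on $\Phbf_\mu=\Phbf'_\mu$: either use the representation $\rho=\int\rho_u\,d\nu(u)$ from \cite{Amosov-Sakbaev-2025}, or simply take the identity $\braket{\Qbf_{\mu,\pi}\rho,\Abf}=\braket{\rho,\Qbf_{\mu,\pi}^*(\Abf)}$ as the meaning of the Pettis integral defining the channel, so that it becomes the duality relation stated in the paper. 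Once this is in place, $\Qbf_{\mu,\pi}^2=\Qbf_{\mu,\pi}$ reduces to $(\Qbf_{\mu,\pi}^*)^2=\Qbf_{\mu,\pi}^*$.

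The core computation uses left-invariance. Put $\Bbb:=\Qbf_{\mu,\pi}^*(\Abf)$; I write it as $\mathbf{B}$ in the argument. Fix $h\in G$. For vectors $u,v$,
\[
    (u,\pi(h)^*\mathbf{B}\pi(h)v) = (\pi(h)u,\mathbf{B}\pi(h)v) = \int (\pi(h)u,\pi(g)^*\Abf\pi(g)\pi(h)v)\,d\mu(g) = \int (u,\pi(gh)^*\Abf\pi(gh)v)\,d\mu(g).
\]
This exhibits a right-translation invariance in $g\mapsto gh$, but the hypothesis is that $\mu$ is left-invariant. I therefore need to fix the order of composition so that left-invariance applies. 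There are two natural options. The first is to compute $\pi(h)\mathbf{B}\pi(h)^*$ instead, which produces the integrand $f(h^{-1}g)$. The second is to define the Heisenberg dual with the order $\pi(g)\Abf\pi(g)^*$ and check which order actually arises from $\pi(g)\rho\pi(g)^*$ acting on $\rho$.

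I would carry out the second option. Let $\rho^g$ denote the state $\Abf\mapsto\braket{\rho,\pi(g)^*\Abf\pi(g)}$. Then $\Qbf_{\mu,\pi}^2\rho=\int\!\!\int \rho^{hg}\,d\mu(g)\,d\mu(h)$, possibly with the composition order $gh$ instead, depending on conventions. For fixed $h$, the inner integral $\int F(hg)\,d\mu(g)$ with bounded $F(k)=\braket{\rho,\pi(k)^*\Abf\pi(k)}$ equals $\int F\,d\mu$ by left-invariance. This holds for simple functions directly, because $\mu(h^{-1}E)=\mu(E)$, and it extends to bounded $F$ by uniform approximation. The outer integral is then of a constant against a probability measure, which gives $\braket{\Qbf_{\mu,\pi}\rho,\Abf}$.

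The main obstacle is to get the order right so that the translation is on the left. With the inner integral taken over the variable that is multiplied on the left, invariance applies directly. If the order comes out as $gh$, I would swap the roles of the iterated integrals, justifying the swap because both iterated integrals equal the weak limit defining the composed channel. Where a Fubini theorem is unavailable for finitely additive $\mu$, I would instead choose the nesting in the definition of $\Qbf^2$ so that the inner variable is the left one.
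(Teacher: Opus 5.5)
Your final Schr\"odinger-picture argument is correct and is essentially the paper's proof. By definition $\Braket{\Qbf_{\mu,\pi}(\Qbf_{\mu,\pi}\rho),\Abf}=\int d\mu(h)\int F(hg)\,d\mu(g)$ with $F(k)=\Braket{\rho,\pi(k)^*\Abf\pi(k)}$, and left-invariance collapses the inner integral to $\int F\,d\mu$; the paper phrases this as $\pi(h)(\Qbf_{\mu,\pi}\rho)\pi(h)^*=\Qbf_{\mu,\pi}\rho$ for every $h$.

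This nesting, with the outer variable acting on the inner one by left translation, is forced by $\Qbf_{\mu,\pi}^2=\Qbf_{\mu,\pi}\circ\Qbf_{\mu,\pi}$, so there is no order to hedge on and nothing to swap. You should drop both the Heisenberg detour and the fallback of interchanging iterated integrals. For finitely additive $\mu$ there is no Fubini theorem. The duality $\Braket{\Qbf_{\mu,\pi}\rho,\Abf}=\Braket{\rho,\Qbf_{\mu,\pi}^*(\Abf)}$ with a weak-operator integral $\Qbf_{\mu,\pi}^*(\Abf)$ can fail for singular $\rho$. And iterating $\Qbf_{\mu,\pi}^*$ produces a right translation $g\mapsto gh$, which left-invariance does not control.
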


\begin{proof}
    The theorem follows since for any \(\rho\in\Sigma(\Bcal(\Hcal))\), \(\Abf\in\Bcal(\Hcal)\), and \(g\in G\),
    \begin{multline*}
        \Braket{\pi(g)\Qbf_{\mu,\pi}\rho\pi(g)^*, \Abf} = \int\Braket{\rho, \pi(h)^*\pi(g)^*\Abf\pi(g)\pi(h) }\,d\mu(h) = \\ =
        \int\Braket{\rho, \pi(t)^*\Abf\pi(t)}\,d\mu(g^{-1}t) =
        \int\Braket{\rho, \pi(t)^*\Abf\pi(t)}\,d\mu(t) = \Braket{\Qbf_{\mu,\pi}\rho, \Abf}.
    \end{multline*}
\end{proof}

Thus, unlike the results in \cite{DzhenzherDzhenzherSakbaev26}, where a semigroup of singularising quantum channels was obtained for \(\mu_s*\mu_t=\mu_{s+t}\), here the quantum channels \(\Qbf_{\mu_t}\) will not form a semigroup.

We now turn our attention to the dynamics of this quantum channel.

Recall that for two finitely-additive measures \(\mu,\nu\colon2^G\to[\,0,1\,]\), their \emph{convolution} \(\mu*\nu\colon2^G\to[\,0,1\,]\) is defined by
\[
    \mu*\nu(E) := \int \nu\left(g^{-1}E\right)\,d\mu(g).
\]
In other words,
\[
    \int f(t)\,d\mu*\nu(t) = \int d\mu(g) \int d\nu(h)\,f(gh)
\]
for any bounded \(f\colon G\to\Cx\).

\begin{proposition}\label{p:din-conv}
    For any \(\mu\colon 2^G\to [\,0,1\,]\),
    \[
        \Qbf_{\mu,\pi}^n = \Qbf_{\mu^{*n}}.
    \]
\end{proposition}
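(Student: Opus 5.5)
We argue by induction on $n$, with the trivial base $n=1$ (where $\mu^{*1}=\mu$). We read $\Qbf_{\mu^{*n}}$ as $\Qbf_{\mu^{*n},\pi}$ for the same representation $\pi$. Everything is checked weakly: two states coincide if they agree on every $\Abf\in\Bcal(\Hcal)$.

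For the inductive step, assume $\Qbf_{\mu,\pi}^{n}=\Qbf_{\mu^{*n},\pi}$, and fix $\rho\in\Sigma(\Bcal(\Hcal))$ and $\Abf\in\Bcal(\Hcal)$. Since $\mu^{*(n+1)}=\mu*\mu^{*n}$ by the definition of iterated convolution, it suffices to show $\Qbf_{\mu,\pi}\Qbf_{\nu,\pi}=\Qbf_{\mu*\nu,\pi}$ for arbitrary finitely-additive $\mu,\nu$ with $\mu(G)=\nu(G)=1$. Unwinding the Pettis integrals gives
\[
    \Braket{\Qbf_{\mu,\pi}\Qbf_{\nu,\pi}\rho,\Abf}
    = \int d\mu(g)\,\Braket{\Qbf_{\nu,\pi}\rho,\pi(g)^*\Abf\pi(g)}
    = \int d\mu(g)\int d\nu(h)\,\Braket{\rho,\pi(h)^*\pi(g)^*\Abf\pi(g)\pi(h)}.
\]
Since $\pi$ is a representation, $\pi(g)\pi(h)=\pi(gh)$, so the inner integrand is $f(gh)$ with
\[
    f(t):=\Braket{\rho,\pi(t)^*\Abf\pi(t)}.
\]
This function is bounded by $\norm{\Abf}$, because $\rho$ is a state and $\pi(t)$ is unitary. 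The convolution identity $\int f\,d\mu*\nu=\int d\mu(g)\int d\nu(h)\,f(gh)$, valid for bounded $f$, then converts the double integral into $\int f\,d(\mu*\nu)=\Braket{\Qbf_{\mu*\nu,\pi}\rho,\Abf}$.

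The main obstacle is justifying this integral identity for finitely-additive measures. In particular, we need the inner function $g\mapsto\int f(gh)\,d\nu(h)$ to be integrable with respect to $\mu$. Since $\mu$ is defined on all of $2^G$, every bounded function is integrable: it is a uniform limit of simple functions, obtained by partitioning its range into small boxes. The identity is therefore simply the integral form of the definition of $\mu*\nu(E)=\int\nu(g^{-1}E)\,d\mu(g)$.

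To check it, start with indicator functions $f=\Ical_E$, where it is exactly that definition. Extend to simple functions by linearity, and finally to bounded $f$ by uniform approximation. Both sides are bounded linear functionals of $f$ in the sup norm with norm at most $1$, so they pass to uniform limits. This completes the induction.
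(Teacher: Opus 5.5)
Your proof is correct and follows the paper's argument. The paper unwinds the $n$-fold iterated Pettis integral, uses $\pi(g_1)\cdots\pi(g_n)=\pi(g_1\cdots g_n)$, and applies the identity $\int f\,d\mu*\nu=\int d\mu(g)\int d\nu(h)\,f(gh)$ for bounded $f$; you organise the same computation as an induction through $\Qbf_{\mu,\pi}\Qbf_{\nu,\pi}=\Qbf_{\mu*\nu,\pi}$, and your verification of that identity (indicators, then simple functions, then uniform limits) fills in a step the paper only asserts.
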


\begin{proof}
    This is since for any state $\rho$,
    \[
        \Qbf_{\mu,\pi}^n\rho = \int d\mu(g_1)\int \ldots \int d\mu(g_n)\,\pi(g_1\ldots g_n)\rho\pi(g_1\ldots g_n)^* =
        \int \pi(g)\rho\pi(g)^*\,d\mu^{*n}(g).
    \]
\end{proof}

For a measure \(\mu\) on \(G\) and some \(h\in G\), define the left shifted measure \(L_h\mu\colon 2^G\to [\,0,1\,]\) by
\[
    L_h\mu(E) := \mu(h E).
\]

\begin{theorem}\label{t:as-left-inv}
    Let $\Hcal$ be a Hilbert space.
    Let \(\mu\colon 2^G\to [\,0,1\,]\) be a finitely-additive measure.
    Let \(\pi\colon G\to\Ucal(\Hcal)\).
    Suppose that for any state \(\rho\in\Sigma(\Bcal(\Hcal))\), iterations \(\Qbf_{\mu,\pi}^n\rho\) of quantum channels weakly-$*$ converge
    to a state (depending on $\rho$), which is a fixed point of any unitary evolution.
    Then
    \[
        \tau\lim_{n\to\infty}\left(\mu^{*n} - L_h(\mu^{*n})\right) = 0
    \]
    for any \(h\in G\), where $\tau$ is the weak topology induced by functions
    \[
        g \mapsto \Braket{\rho, \pi(g)^*\Abf\pi(g)},
        \qquad g\in G,
        \qquad \Abf\in\Bcal(\Hcal),
        \quad  \rho\in\Sigma(\Bcal(\Hcal)).
    \]
\end{theorem}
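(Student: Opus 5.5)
We have to show that for every $h\in G$, $\Abf\in\Bcal(\Hcal)$ and $\rho\in\Sigma(\Bcal(\Hcal))$,
\[
    \int \Braket{\rho, \pi(g)^*\Abf\pi(g)}\,d\mu^{*n}(g) - \int \Braket{\rho, \pi(g)^*\Abf\pi(g)}\,dL_h(\mu^{*n})(g) \xrightarrow[n\to\infty]{} 0 .
\]
The plan is to express both integrals through the channel iterates. Proposition~\ref{p:din-conv} identifies $\Qbf_{\mu,\pi}^n$ with the channel built from $\mu^{*n}$, and a change of variables identifies the shifted measure with a unitary conjugation. The fixed-point hypothesis then removes that conjugation in the limit.

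First, by Proposition~\ref{p:din-conv} and the definition of the Pettis integral,
\[
    \int \Braket{\rho, \pi(g)^*\Abf\pi(g)}\,d\mu^{*n}(g) = \Braket{\Qbf_{\mu,\pi}^n\rho, \Abf}.
\]

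Second, I treat the shifted measure. Since $L_h\nu(E)=\nu(hE)$, the measure $L_h\nu$ is the pushforward of $\nu$ under $t\mapsto h^{-1}t$. Hence $\int f\,dL_h\nu = \int f(h^{-1}t)\,d\nu(t)$. I would check this first on indicator functions: $\int \Ical_E\,dL_h\nu = \nu(hE)$, and indeed $\Ical_E(h^{-1}t)=\Ical_{hE}(t)$. It then extends to simple functions and, by uniform approximation, to bounded functions. The integrand $f(g)=\Braket{\rho,\pi(g)^*\Abf\pi(g)}$ is bounded by $\norm{\Abf}$, so this applies. Using $\pi(h^{-1}t)=\pi(h)^*\pi(t)$, we get
\[
    \int f\,dL_h(\mu^{*n}) = \int \Braket{\rho, \pi(t)^*\pi(h)\Abf\pi(h)^*\pi(t)}\,d\mu^{*n}(t) = \Braket{\Qbf_{\mu,\pi}^n\rho, \pi(h)\Abf\pi(h)^*}.
\]
The last expression equals $\Braket{\pi(h)^*(\Qbf_{\mu,\pi}^n\rho)\pi(h), \Abf}$.

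Third, let $\rho_\infty$ be the weak-$*$ limit of $\Qbf_{\mu,\pi}^n\rho$. Weak-$*$ convergence means convergence against each fixed operator. Applying it to the fixed operators $\Abf$ and $\pi(h)\Abf\pi(h)^*$, the difference above converges to
\[
    \Braket{\rho_\infty, \Abf} - \Braket{\rho_\infty, \pi(h)\Abf\pi(h)^*}.
\]
This vanishes because $\rho_\infty$ is a fixed point of any unitary evolution; apply that to $\Ubf=\pi(h)^*$. Since $\tau$ is the weak topology generated exactly by these functionals, the claimed $\tau$-convergence follows.

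The main obstacle is the change-of-variables step for a merely finitely-additive measure on $2^G$. I would handle it by the argument on indicators and simple functions given above, together with the fact that bounded functions are uniform limits of simple ones, so both integrals are continuous under uniform convergence. A lesser point is the direction of the shift: since $L_h\nu(E)=\nu(hE)$, the substitution is $h^{-1}t$ rather than $ht$, and one must keep track of which of $\pi(h)$ or $\pi(h)^*$ appears. Because the limiting state is invariant under all unitaries, either orientation gives the same conclusion.
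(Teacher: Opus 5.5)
Your proof is correct and is essentially the paper's argument: Proposition~\ref{p:din-conv} turns both integrals into $\Braket{\Qbf_{\mu,\pi}^n\rho,\cdot}$ evaluated at $\Abf$ and at a unitary conjugate of $\Abf$, and the weak-$*$ convergence together with the unitary invariance of $\rho_\infty$ kills the difference. Your explicit change of variables yields $\pi(h)\Abf\pi(h)^*$ where the paper writes $\pi(h)^*\Abf\pi(h)$ (yours is the one consistent with $L_h\mu(E)=\mu(hE)$), but, as you observe, the orientation is immaterial since $\rho_\infty$ is invariant under every unitary.
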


\begin{proof}
    Take any function
    \[
        f\colon g \mapsto \Braket{\rho, \pi(g)^*\Abf\pi(g)},
    \]
    given by some \(\Abf\in\Bcal(\Hcal)\) and \(\rho\in\Sigma(\Bcal(\Hcal))\).
    Let \(\rho_\infty\) be a weak-$*$ limit of \(\Qbf_{\mu,\pi}^n\rho\).
    Take any \(h\in G\).
    Then, by Proposition~\ref{p:din-conv},
    \[
        \int f\,d\mu^{*n} = \Braket{\Qbf_{\mu,\pi}^n\rho, \Abf} \xrightarrow[n\to\infty]{} \Braket{\rho_\infty, \Abf}
    \]
    and
    \[
        \int f\,dL_h\mu^{*n} = \Braket{\Qbf_{\mu,\pi}^n\rho, \pi(h)^*\Abf\pi(h)} \xrightarrow[n\to\infty]{} \Braket{\rho_\infty, \pi(h)^*\Abf\pi(h)}.
    \]
    The equality \(\Braket{\rho_\infty, \Abf} = \Braket{\rho_\infty, \pi(h)^*\Abf\pi(h)}\), which holds since \(\rho_\infty\) is a fixed point of any unitary evolution, concludes the proof.
\end{proof}

As Theorem~\ref{t:as-left-inv} shows, if \(\mu=\mu*\mu\) is not left-invariant, then there can be no convergence to states that are fixed points of any unitary evolution.

For the final touch, we consider the dynamics of quantum channels.
The following results on convergence may not hold for iterations of quantum channels, so we prove them for the Ces\'{a}ro averages.
We will rely heavily on the following result of Yosida and Kakutani \cite[Theorem~1]{Yosida-Kakutani}.

\begin{theorem}[Yosida--Kakutani]\label{t:Yosida-Kakutani}
    Let \(T\colon B\to B\) be a bounded linear operator on a Banach space $B$.
    Suppose that there exists \(C>0\) such that for any $n\geq 1$ we have \(\norm{T^n} \leq C\).
    Suppose that for any \(x\in B\), the sequence of the Ces\'{a}ro averages
    \[
        x_n := \frac{1}{n}\left(T+\ldots + T^n\right)x
    \]
    contains a subsequence that converges weakly to some \(\overline{x}\in B\).
    Then \(x_n\) converges strongly to \(\overline{x}\), and the operator
    \[
        \overline{T}\colon x\mapsto \overline{x}
    \]
    is a bounded linear operator \(B\to B\) such that
    \[
        \overline{T}T = T\overline{T} = \overline{T}^2 = \overline{T}.
    \]
\end{theorem}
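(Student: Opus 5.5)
This is the classical mean ergodic theorem of Yosida and Kakutani, which the paper cites as \cite[Theorem~1]{Yosida-Kakutani}. Even so, I would prove it along the standard route: an algebraic identity for the Ces\'aro averages, then a Hahn--Banach argument, then the projection identities.

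\emph{Step 1: vanishing of the defect.} Set $A_n := \frac1n(T+\ldots+T^n)$. The power bound gives $\norm{A_n}\leq C$, and the telescoping identity
\[
    A_n(T-\Ibf) = (T-\Ibf)A_n = \tfrac1n\left(T^{n+1}-T\right)
\]
gives $\norm{A_n(T-\Ibf)}\leq 2C/n\to0$. Consequently $A_nx\to0$ strongly for every $x$ in the closure $\overline{(T-\Ibf)B}$ of the range. Here I use the uniform bound on $\norm{A_n}$ to pass from the range to its closure by an $\varepsilon/3$ argument.

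\emph{Step 2: decomposing $x$.} Fix $x$ and a subsequence with $A_{n_k}x\rightharpoonup\overline{x}$.

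First, $\overline{x}$ is $T$-invariant. Indeed, $TA_{n_k}x - A_{n_k}x\to0$ in norm by Step~1. On the other hand, $TA_{n_k}x\rightharpoonup T\overline{x}$, because bounded operators are weakly continuous. Hence $T\overline{x}=\overline{x}$.

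Second, $x-\overline{x}\in\overline{(T-\Ibf)B}$. Note that $x-A_nx=\frac1n\sum_{k=1}^n(\Ibf-T^k)x$ lies in $(T-\Ibf)B$, since $\Ibf-T^k=(\Ibf-T)(\Ibf+\ldots+T^{k-1})$. So $x-A_{n_k}x\rightharpoonup x-\overline{x}$ is a weak limit of elements of a closed subspace. A closed subspace is weakly closed by Hahn--Banach (Mazur), so $x-\overline{x}\in\overline{(T-\Ibf)B}$.

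\emph{Step 3: strong convergence.} Since $T\overline{x}=\overline{x}$, we have $A_n\overline{x}=\overline{x}$ for all $n$. By Step~1, $A_n(x-\overline{x})\to0$. Therefore $A_nx\to\overline{x}$ strongly, along the whole sequence and not just the subsequence. This also makes $\overline{x}$ uniquely determined by $x$.

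\emph{Step 4: properties of $\overline{T}$.} Define $\overline{T}x:=\overline{x}=\lim A_nx$.
\begin{itemize}
    \item $\overline{T}$ is linear as a pointwise limit of linear maps, and $\norm{\overline{T}}\leq C$.
    \item $T\overline{T}=\overline{T}$ is the invariance shown in Step~2.
    \item $\overline{T}T=\overline{T}$ follows because $A_nT-A_n=\frac1n(T^{n+1}-T)\to0$ in norm.
    \item $\overline{T}^2=\overline{T}$ follows because $A_n\overline{x}=\overline{x}$ for every $n$.
\end{itemize}

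\emph{Main obstacle.} The only genuinely non-algebraic step is identifying $x-\overline{x}$ as an element of the \emph{norm} closure of $(T-\Ibf)B$ when we only know weak convergence. This is exactly where Mazur's theorem, that closed convex sets are weakly closed, is needed. It must be applied before Step~3 can upgrade weak subsequential convergence to strong convergence of the full sequence.
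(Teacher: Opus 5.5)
Your proof is correct. The paper gives no argument of its own here: the statement is quoted as Theorem~1 of Yosida and Kakutani and used as a black box. What you wrote is the classical proof of that theorem: the bound $\norm{A_n}\leq C$, the telescoping identity $A_n(T-\Ibf)=\frac1n(T^{n+1}-T)$, $T$-invariance of a weak cluster point, Mazur's theorem to place $x-\overline{x}$ in the norm closure of $(T-\Ibf)B$, and the upgrade to norm convergence of the full sequence.

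Compared with the bare citation, your argument is self-contained and makes explicit three things the statement tacitly presupposes:
\begin{itemize}
    \item the weak subsequential limit $\overline{x}$ is unique, so $\overline{T}$ is well defined;
    \item $\norm{\overline{T}}\leq C$;
    \item $\overline{T}$ is the bounded projection onto $\ker(T-\Ibf)$ along $\overline{(T-\Ibf)B}$, and these closed subspaces are complementary. Every $x$ splits by your Step~2, and the intersection is trivial since $A_ny=y$ and $A_ny\to0$ for $y$ in both.
\end{itemize}
Your details check out, including the use of the case $n=1$ of the hypothesis ($\norm{T}\leq C$) in the bound $2C/n$.

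Your Step~2 also makes visible that the existence of a weakly convergent subsequence \emph{in $B$} is the only analytic input. This is exactly what must be verified when the paper applies the theorem with $B=\Mcal_*$ in Theorem~\ref{t:norm-cesaro}. There it is justified by weak compactness of $\Sigma_n(\Mcal)$, which fails in general. For the bilateral shift on $\ell_2(\Z)$, the Ces\`aro averages of $\rho_{e_1},\rho_{e_2},\ldots$ have no weakly convergent subsequence in the predual. A limit would be a normal state vanishing on every $\Pbf_{e_j}$, which is impossible by complete additivity.
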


Below by the quantum channel \(\Phbf\colon\Sigma_n(\Mcal)\to\Sigma_n(\Mcal)\) we mean the linear positive map \(\Phbf\colon \Mcal_*\to\Mcal_*\) such that \(\Phbf(\Sigma_n(\Mcal))\subset\Sigma_n(\Mcal)\).

\begin{theorem}\label{t:norm-cesaro}
    Let \(\Mcal\) be a von Neumann algebra on a Hilbert space $\Hcal$.
    Let \(\Phbf\colon\Sigma_n(\Mcal)\to\Sigma_n(\Mcal)\) be a quantum channel.
    Then the Ces\'{a}ro averages
    \[
        \frac{1}{n}\left(\Phbf + \ldots + \Phbf^n\right)
    \]
    converge in strong operator topology to a quantum channel \(\overline{\Phbf}\) such that
    \[
        \overline{\Phbf}\Phbf = \Phbf\overline{\Phbf} = \overline{\Phbf}^2 = \overline{\Phbf}.
    \]
\end{theorem}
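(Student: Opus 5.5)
The plan is to apply Theorem~\ref{t:Yosida-Kakutani} (Yosida--Kakutani) on the Banach space $B := \Mcal_*$, the predual of $\Mcal$, with $T := \Phbf$. After that, check that the limit operator $\overline{\Phbf}$ is again a quantum channel in the sense just defined. There are three steps: a uniform bound on $\norm{\Phbf^n}$, relative weak compactness of the Ces\`aro averages, and identification of the limit as a channel.

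\emph{Uniform bound.} Every self-adjoint $\omega\in\Mcal_*$ has a Jordan decomposition $\omega=\omega_+-\omega_-$ with $\omega_\pm\geq 0$ normal and $\norm{\omega}=\norm{\omega_+}+\norm{\omega_-}$. Since $\Phbf$ is positive and maps $\Sigma_n(\Mcal)$ into itself, it preserves the norm of positive normal functionals, because $\norm{\sigma}=\Braket{\sigma,\Ibf}$ for $\sigma\geq 0$. Hence $\norm{\Phbf\omega}\leq\norm{\omega}$ for self-adjoint $\omega$. Splitting a general $\omega$ into its real and imaginary parts gives $\norm{\Phbf}\leq 2$. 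The same argument applies to every power $\Phbf^n$, since $\Phbf^n$ is also positive and state-preserving. So $\norm{\Phbf^n}\leq C:=2$ for all $n$.

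\emph{Weak subsequential convergence.} This is the main obstacle, because $\Mcal_*$ is not reflexive. I would first try the Akemann--Takesaki characterisation of relatively weakly compact subsets of $\Mcal_*$: a bounded set $K$ is relatively weakly compact if and only if $\sup_{\omega\in K}\abs{\Braket{\omega,\Pbf_k}}\to 0$ for every decreasing sequence of projectors $\Pbf_k\searrow 0$. One would then try to verify this uniform $\sigma$-additivity for the set $\{x_n\}$ of Ces\`aro averages, which by the Jordan decomposition reduces to $x_n$ built from a single normal state $\rho$. If the condition holds, the Eberlein--\v{S}mulian theorem produces a weakly convergent subsequence. Theorem~\ref{t:Yosida-Kakutani} then yields strong convergence of $x_n$ to $\overline{x}$, boundedness and linearity of $\overline{\Phbf}$, and the identities $\overline{\Phbf}\Phbf=\Phbf\overline{\Phbf}=\overline{\Phbf}^2=\overline{\Phbf}$.

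I expect this step to fail without an extra hypothesis, and I would check the following example before committing. Take $\Mcal=\ell^\infty(\N)$, so that $\Mcal_*=\ell^1(\N)$, and let $\Phbf$ be the right shift $\delta_k\mapsto\delta_{k+1}$, which is a channel in the above sense. Then $x_n=\frac1n(\delta_2+\ldots+\delta_{n+1})$. Any weak limit of a subsequence would have to vanish coordinatewise, yet it would also have to pair with $\Ibf$ to give $1$. So no subsequence converges weakly, and the statement as written seems to require a further assumption. Natural candidates are: a faithful normal $\Phbf$-invariant state, or finite dimensionality of $\Hcal$, or reading the convergence in the weak-$*$ topology of $\Mcal^*$ instead.

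\emph{The limit is a channel.} Given the convergence, this step is routine. Each Ces\`aro average is positive and maps $\Sigma_n(\Mcal)$ into itself. The cone of positive normal functionals is norm-closed, and $\omega\mapsto\Braket{\omega,\Ibf}$ is norm-continuous. Therefore the strong limit $\overline{\Phbf}$ is positive and satisfies $\Braket{\overline{\Phbf}\rho,\Ibf}=1$ for every $\rho\in\Sigma_n(\Mcal)$, i.e.\ $\overline{\Phbf}(\Sigma_n(\Mcal))\subset\Sigma_n(\Mcal)$.
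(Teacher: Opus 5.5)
Your diagnosis is correct: the defect is in the paper's proof, not in your plan. The paper takes exactly your route ($B=\Mcal_*$, $T=\Phbf$, Theorem~\ref{t:Yosida-Kakutani}). It disposes of your ``main obstacle'' in one line: since $\Sigma_n(\Mcal)$ is weakly compact, the Ces\`aro means have a subsequence converging weakly to some $\overline{\rho}\in\Sigma(\Mcal)$.

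For infinite-dimensional $\Mcal$ that compactness claim is false; already $\{\delta_k\}\subset\ell^1(\N)$ has no weakly convergent subsequence. What is true is weak-$*$ compactness of $\Sigma(\Mcal)$ in $\Mcal^*$. That does not in general yield convergent subsequences, and it does not keep the limit in $B=\Mcal_*$. In your shift example, every weak-$*$ cluster point of the means vanishes on each $\mathbf{1}_{\{k\}}$ and is therefore singular. So the hypothesis of Theorem~\ref{t:Yosida-Kakutani} is never verified.

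Since norm convergence implies weak convergence, your example refutes the statement itself, not just the method, and you can say so outright. It even fits the paper's own framework. Take the bilateral shift $\Ubf e_k=e_{k+1}$ on $\ell_2(\Z)$; the unitary channel $\rho\mapsto\Ubf\rho\Ubf^*$ is $\Qbf_{\mu,\pi}$ with $\mu$ a Dirac measure. The means $\frac1n\sum_{k=1}^n\rho_{e_k}$ of $\rho_{e_0}$ have no weakly convergent subsequence in the trace class. A trace-class limit would vanish on every $\Pbf_{e_k}$, hence have trace $0$, yet take the value $1$ on $\Ibf$. So the remark following the theorem, about separable $\Hcal$ and $\sigma$-additive $\mu$ on $2^{\Ucal(\Hcal)}$, fails as well.

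The rest of your proposal is sound. The bound $C=2$ suffices; in fact $\norm{\Phbf}=1$, because $\Phbf^*$ is a unital positive map on $\Mcal$. Your last step supplies the check that $\overline{\Phbf}$ is again a channel, which the paper omits.

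Of your suggested repairs, two work.
\begin{itemize}
\item Finite-dimensional $\Hcal$ works directly.
\item A faithful normal state $\omega_0$ with $\Phbf\omega_0=\omega_0$ also works. If $0\le\omega\le c\,\omega_0$, the whole orbit of $\omega$ stays in the order interval $[0,c\,\omega_0]$, which is relatively weakly compact by the Akemann criterion you quote. Such $\omega$ span a norm-dense subspace of $\Mcal_*$: in the GNS representation of $\omega_0$ with cyclic separating vector $\Omega$, the vector functionals satisfy $\omega_{a'\Omega}\le\norm{a'}^2\omega_0$ for $a'\in\Mcal'$, and $\Mcal'\Omega$ is dense. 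Power-boundedness then extends the convergence to all of $\Mcal_*$.
\end{itemize}
The weak-$*$ reading does not help. In your example $\Braket{x_n,\mathbf{1}_A}=\frac1n\abs{A\cap\{2,\ldots,n+1\}}$ oscillates between about $\frac13$ and $\frac23$ for $A=\bigcup_m\{4^m,\ldots,2\cdot4^m-1\}$. So the Ces\`aro averages do not converge weak-$*$ either.
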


\begin{proof}
    Fix any \(\rho\in\Sigma_n(\Mcal)\).
    Since \(\norm{\Phbf}\leq 1\), the sequence of
    \[
        \rho_n := \frac{1}{n}\left(\Phbf + \ldots + \Phbf^n\right)\rho \in \Sigma_n(\Mcal)
    \]
    is well defined.
    Since \(\Sigma_n(\Mcal)\) is weakly compact, the sequence \(\{\rho_n\}\) contains a subsequence that converges weakly to some \(\overline{\rho}\in \Sigma(\Mcal)\).
    Since \(\norm{\Phbf} \leq 1\), it remains to apply Theorem~\ref{t:Yosida-Kakutani} for \(B=\Mcal_*\), \(T=\Phbf\), $x=\rho$, and \(x_n=\rho_n\).
    This shows that \(\rho_n\) converges to \(\overline{\rho} =: \overline{\Phbf}\rho\) strongly.
\end{proof}

By Theorem~\ref{t:norm-cesaro} and \cite[Theorem~3.1]{Amosov-Sakbaev-2025}, for a separable Hilbert space \(\Hcal\) and a $\sigma$-additive measure \(\mu\) on \(2^{\Ucal(\Hcal)}\) itself (taking the representation \(\pi\colon\Ubf\mapsto\Ubf\)), the Ces\'{a}ro averages of \(\Qbf_{\mu,\pi}\colon\Sigma_n(\Bcal(\Hcal))\to\Sigma_n(\Bcal(\Hcal))\) converge in SOT to some quantum channel \(\overline{\Phbf}\colon\Sigma_n(\Bcal(\Hcal))\to\Sigma_n(\Bcal(\Hcal))\).
It is interesting to find out whether \(\overline{\Phbf}\) itself can be given as the Pettis integral over some measure, which probably should be left-invariant.

\section{Conclusion}

In this paper, we have investigated the structural and dynamical properties of quantum channels $\Phbf_\mu$ and $\Qbf_{\mu,\pi}$ induced by averaging procedures via the Pettis integral. By establishing a bridge between operator algebra theory and the foundational properties of Ulam real-valued measurable cardinals, we demonstrated how the analytical behaviour of the underlying measure directly dictates the topological nature of the resulting quantum states. Specifically, we provided a complete characterisation of channels that simultaneously preserve $\sigma$-additivity while enforcing a strict singularisation of any incoming state.
Furthermore, we extended these constructions to general groups of unitary operators, characterising their fixed points and analysing the asymptotic behaviour of their iterations.

The framework developed herein highlights the subtle geometry of singular $\sigma$-additive states, which naturally emerge at the intersection of non-separable Hilbert spaces and non-trivial set-theoretic assumptions. While the ergodic properties of normal states under such channels can be successfully resolved via classical mean ergodic theorems, the singularised dynamics pose significantly deeper challenges.
As it was mentioned for a separable Hilbert space \(\Hcal\) and a $\sigma$-additive measure \(\mu\colon2^{\Ucal(\Hcal)}\to[\,0,1\,]\), the Ces\'{a}ro averages of \(\Qbf_{\mu,\pi}\colon\Sigma_n(\Bcal(\Hcal))\to\Sigma_n(\Bcal(\Hcal))\) converge in SOT to some quantum channel \(\Sigma_n(\Bcal(\Hcal))\to\Sigma_n(\Bcal(\Hcal))\).
It is natural to examine the analogous results for singularising quantum channels.
However, it is much harder to achieve such results, since we cannot apply the Yosida--Kakutani result (Theorem~\ref{t:Yosida-Kakutani}).
There are analogous results for quasi-compact operators; 
see, for example, \cite[Uniform ergodic theory]{Dunford-Schwartz-vol1}.
Unfortunately, we have been unable to establish the quasi-compactness of singularising quantum channels \(\Qbf_{\mu,\pi}\) in any  particular noteworthy cases.

\printbibliography

\end{document}